\documentclass[11pt]{article}
\usepackage{graphicx} 

\usepackage{fullpage}

\usepackage{array}      
\usepackage{tabularx}
\usepackage{xurl} 

\usepackage{url}
\usepackage[hidelinks]{hyperref}
\usepackage[utf8]{inputenc}
\usepackage[small]{caption}
\usepackage{graphicx}
\usepackage{booktabs}
\usepackage{algorithm}
\usepackage{algorithmic}
\usepackage{framed}
\usepackage{orcidlink}

\usepackage{amsmath,amssymb,amsthm}
\usepackage{tikz}
\usetikzlibrary{arrows.meta,positioning,calc}
\usepackage{cleveref}
\usepackage{standalone}
\usepackage{subcaption}
\usepackage{hyperref}

\usepackage[numbers]{natbib} 

\usepackage{adjustbox}

\usepackage[T1]{fontenc}
\usepackage{lmodern}

\usepackage{marvosym}

\usepackage{comment}

\tikzset{
    vertex/.style = {
        circle,
        draw,
        fill=white,
        inner sep=1pt,
        minimum size=5mm,
        font=\small
    },
    dots/.style = {
        inner sep=2pt,
        font=\Large
    }
}

\usepackage{framed}

\newtheorem{theorem}{Theorem}
\newtheorem*{theorem*}{Theorem}

\newtheorem{definition}{Definition}
\newtheorem*{definition*}{Definition}
\newtheorem{lemma}{Lemma}
\newtheorem*{lemma*}{Lemma}

\newtheorem{proposition}{Proposition}
\newtheorem*{proposition*}{Proposition}

\newtheorem{corollary}{Corollary}
\newtheorem{example}[theorem]{Example}

\newenvironment{restated}[3]{%
  \medskip\noindent
  \textbf{#1 \ref{#2}} {\mdseries(#3, restated).}\ \itshape
}{%
  \upshape\par\medskip
}

\title{Tractable Exclusion Zones for Instant-Runoff Voting \\ on Trees and Beyond\footnote{An earlier version of this work was previously circulated under the title “Instant Runoff Voting on Graphs: Exclusion Zones and Distortion".}}
\date{}

\author{%
\begin{tabularx}{\textwidth}{@{}>{\centering\arraybackslash}X >{\centering\arraybackslash}X >{\centering\arraybackslash}X@{}}
\textbf{Georgios Birmpas}~\orcidlink{0000-0003-4733-7885} & \textbf{Georgios Chionas}~\orcidlink{0000-0003-2993-8636} & \textbf{Efthyvoulos Drousiotis}~\orcidlink{0000-0002-9746-456X} \\
\small University of Liverpool & \small University of Liverpool & \small University of Liverpool \\[0.9em]
\textbf{Soodeh Habibi}~\orcidlink{0009-0002-5486-216X} & \textbf{Marios Mavronicolas}~\orcidlink{0009-0009-0115-3045} & \textbf{Paul Spirakis}~\orcidlink{0000-0001-5396-3749} \\
\small University of Liverpool & \small University of Cyprus & \small University of Liverpool \\
\end{tabularx}%
}

\begin{document}
\newcommand{\bw}{\text{bw}}
\newcommand{\tw}{\text{tw}}

\newcommand{\SC}{\text{SC}}
\newcommand{\IRV}{\text{IRV}}
\newcommand{\dist}{\text{dist}}
\newcommand{\med}{\text{med}}
\newcommand{\Trace}{\text{Trace}}

\newcommand{\INF}{\mathrm{INF}}
\newcommand{\botc}{\bot}

\maketitle

\begingroup
\renewcommand{\thefootnote}{} 
\footnotetext{\footnotesize
Emails:
\href{mailto:G.Birmpas@liverpool.ac.uk}{\nolinkurl{G.Birmpas@liverpool.ac.uk}},
\href{mailto:g.chionas@liverpool.ac.uk}{\nolinkurl{g.chionas@liverpool.ac.uk}},
\href{mailto:Efthyvoulos.Drousiotis@liverpool.ac.uk}{\nolinkurl{Efthyvoulos.Drousiotis@liverpool.ac.uk}},
\href{mailto:S.Habibi@liverpool.ac.uk}{\nolinkurl{S.Habibi@liverpool.ac.uk}},
\href{mailto:mavronicolas.marios@ucy.ac.cy}{\nolinkurl{mavronicolas.marios@ucy.ac.cy}},
\href{mailto:spirakis@liverpool.ac.uk}{\nolinkurl{P.Spirakis@liverpool.ac.uk}}.
}
\endgroup
\setcounter{footnote}{0} 

\begin{abstract}
Instant-runoff voting (IRV) is often used when voters rank candidates rather than choosing only one favourite. 
When preferences are shaped by geography, ideology, social proximity, or organisational structure, it is natural to model voters and candidates as positions in a network. 
We study IRV under graph-induced metric preferences where each vertex of an unweighted undirected graph hosts one voter and is also a possible candidate location. Voters rank candidates by shortest-path distance with fixed deterministic tie-breaking. 
We focus on \emph{exclusion zones} i.e. sets \(S\) such that, whenever at least one candidate lies in \(S\), the IRV winner must also lie in \(S\). 
Such zones serve as robustness certificates, identifying regions whose participation prevents outside winners from emerging.

For general graphs, exclusion-zone verification is co-NP-complete and minimum-zone computation is NP-hard. 
We show that both problems become polynomial-time solvable on trees. 
Our main tool is a membership test asking whether a candidate can be forced to lose using opponents from a restricted region. 
A round-1 reduction shows that any such loss has a witness in which the candidate is eliminated in the first IRV round, enabling a bottom-up dynamic program on trees.

We also show that minimum-zone computation has a much smaller search space than its definition suggests. 
The pairwise-loss graph, obtained from all two-candidate elections, imposes closure constraints on every exclusion zone. 
With deterministic tie-breaking this graph is a tournament, implying that every nonempty exclusion zone on a tree is generated by the closure of one vertex. 
Thus, the minimum exclusion zone can be found by testing only linearly many candidate sets. 
On the opposite front, we refine the intractability range of computing minimum exclusion zones on {\it general} graphs, extending it to a much broader class of deterministic elimination rules,
dubbed as {\it Strong Forced Elimination}.
\end{abstract}
\
\section{Introduction}\label{sec:introduction}

\subsection{Motivation and Related Work}

Aggregating individual preferences into a collective decision lies at the heart of voting theory and it has found numerous applications in areas such as recommendation systems and machine learning.  In many real-world settings, these preferences are shaped by underlying geometry, geography, or networks~\citep{Black48,Moulin80,SchummerVohra02,ProcacciaTennenholtz13}. 
Voters may prefer ideologically close candidates, facilities that are geographically nearby, representatives who are close in a social or organisational network, or options that are similar to their own position in a recommendation or deliberation system. The standard way to model such settings is by endowing agents to a metric space, whereby voters and candidates are associated with points in the underlying metric, and preferences are determined by their proximity in the metric space.

Our focus is on \textit{instant-runoff voting} (IRV), and specifically on the computation of the positions favored by this voting system. In particular, IRV is an iterative voting rule; in each round, agents vote for their most preferred candidate among the remaining ones, while the candidate who received the least amount of votes is eliminated. This process is repeated until there is one candidate left, who is declared the winner.
\footnote{IRV has primarily found applications in national elections of countries including Australia, India and Ireland. Nevertheless, IRV has been employed as a rank aggregation rule inside ML algorithms, c.f., \cite{haladjian2020sensor} and \cite{wang2025ranked}.} 
This iterative vote transferring is often viewed as a way of favouring \textit{broadly acceptable} candidates, since a candidate benefits not only from voters that rank her first, but also from voters who ranked her reasonably highly, whose vote could potentially be transferred to her in subsequent rounds. ~\citep{fraenkel2006alternative,fraenkel2006failure}. 
Recent work on IRV formalises this moderating behaviour and more broadly, attempts to characterize the candidate positions, in the underlying metric, favored by IRV by introducing the notion of \emph{exclusion zones}~\citep{TomlinsonUK24Moderating,Tomlinson25exclusion}. 
An exclusion zone is a set \(S\) of candidate positions with the following guarantee: if at least one candidate from \(S\) is present, then the IRV winner must also lie in \(S\).  
Thus, exclusion zones certify that once a region is represented, the winner cannot come from outside it. 
They identify regions of the metric that are protected against outside winners: once a candidate from the zone runs, candidates outside the zone are excluded from winning.
\cite{Tomlinson25exclusion} characterize the IRV exclusion zones when voters and candidates are uniformly distributed over the interval $[0,1]$. Subsequently, they studied IRV exclusion zones in graph voting, showcasing the computational challenges of finding IRV exclusion zones in general graphs.

This notion has a natural interpretation in networked decision-making. 
In a political or organisational election, an exclusion zone can represent a region of moderate or broadly connected positions that prevents peripheral candidates from winning. 
In a social network, it can identify communities or central regions whose participation changes the set of possible winners. 
In facility-location language, it can describe a set of locations such that, once one facility from the set remains available, the final survivor of a sequential closure process must also lie in that set. 
More generally, exclusion zones provide structural certificates of stability for graph-based collective decision systems.

In this paper, we restrict our attention to IRV exclusion zones in graph voting, by answering computational questions left open by \cite{Tomlinson25exclusion}. To be more precise, in graph voting, each vertex hosts one voter and is also a potential candidate, and voters rank candidates by shortest-path distance. 
This graph model captures settings where preferences are shaped by network proximity, such as elections in social or organizational networks, committee selection in institutions with networked communities, representative selection in deliberative platforms, or sequential closure and selection problems in facility location settings, c.f., \cite{bandelt1985networks, hansen1986equivalence, wendell1981perspectives}. 
It is simple enough to support exact algorithmic analysis, while still capturing the fact that collective decisions often depend on relational structure rather than only on one-dimensional ideology.

The computational challenge is that the definition quantifies over all candidate sets. 
To verify that \(S\) is an exclusion zone, one must rule out every possible election in which a candidate from \(S\) runs, but the winner lies outside \(S\). 
On general graphs, this problem is computationally intractable: deciding whether a given set is an IRV exclusion zone is co-NP-complete, and computing a minimum exclusion zone is NP-hard~\citep{Tomlinson25exclusion}. 
This raises the central question of this paper:
\[
\textit{Which graph structures admit exact and efficient computation of IRV exclusion zones?}
\]

We answer this question positively for trees. 
Although trees have unique paths and simple separators, finding IRV exclusion zones on trees is still nontrivial. A candidate's first-round support depends on the locations of all opponents, and the sequential elimination process can create dependencies across different parts of the tree. 
Nevertheless, we show that both exclusion-zone verification and minimum-zone computation are polynomial-time solvable on trees.

Our work fits into a broader theme in computational social choice where structural restrictions on preferences or instances can make otherwise hard problems tractable~\citep{ChevaleyreELM07Intro,BrandtEtAlHandbookComSoc,ElkindLacknerPeters22Survey}. 
Here, the relevant restriction is graph-theoretic. 
Trees are a canonical sparse class, in which bottom-up dynamic programming and separator-based reasoning are particularly natural ~\citep{BerteleBrioschi72NonserialDP,Bodlaender88TreewidthDP}. 
Part of contribution is to identify the specific facts needed for IRV exclusion zones on trees.

A second key insight is that minimum-zone computation is far more structured than it first appears. 
Pairwise elections impose closure constraints on every possible exclusion zone, and under deterministic tie-breaking, these constraints reduce the search from exponentially many subsets to at most \(n\) singleton-generated closures. 
Thus, the tree dynamic program is needed only to verify this small canonical family.

\subsection{Contributions}

We consider deterministic Graph-IRV on an unweighted tree \(T=(V,E)\), with one voter per vertex, candidates located at vertices, preferences induced by shortest-path distance, and fixed deterministic tie-breaking. 
Our main contributions are:

\begin{itemize}
    \item \textbf{A \textsc{Kill} characterization for exclusion-zone verification.}
    We introduce \(\textsc{Kill}(T,u,A)\), which asks whether a designated candidate \(u\) can be forced to lose using only opponents from an allowed region \(A\). 
    We show that \(S\) is an exclusion zone if and only if no \(u\in S\) can be killed using opponents from \(V\setminus S\).

    \item \textbf{A polynomial-time \textsc{Kill} algorithm on trees.}
    We prove a round-1 reduction. If \(u\) can be forced to lose, then \(u\) can be forced to be eliminated in the first IRV round. 
    We then give a bottom-up dynamic program for \textsc{Kill} on trees, using antichain opponent placements, boundary representatives, and a two-recipient merge lemma to keep the state space polynomial.

    \item \textbf{Polynomial-time verification and minimum-zone computation.}
    The \textsc{Kill} algorithm gives polynomial-time exclusion-zone verification on trees. 
    For minimum-zone computation, we define the pairwise-loss graph \(L(T)\), where \(x\to y\) means that \(x\) loses to \(y\) in the two-candidate election. 
    Every exclusion zone is closed under reachability in \(L(T)\). 
    Since deterministic tie-breaking makes \(L(T)\) a tournament, every nonempty closed set is generated by the closure of one vertex. 
    Hence, every nonempty exclusion zone on a tree is one of at most \(n\) singleton-generated closures, which can be tested using the \textsc{Kill} verifier.

    \item \textbf{Hardness beyond trees and beyond IRV.}
    To clarify the limits of tractability, we use Strong Forced Elimination to extend known general-graph hardness phenomena beyond IRV. For deterministic rank-based elimination rules satisfying this property, exclusion-zone verification remains co-NP-complete and minimum-zone computation remains NP-hard on general graphs.
\end{itemize}
\section{Model and Exclusion Zones}\label{sec:model}

Let \(G=(V,E)\) be a connected unweighted graph with \(n=|V|\). 
Each vertex hosts one voter and is also a possible candidate location. 
A candidate set is any nonempty subset \(K\subseteq V\); we assume that at most one candidate can occupy each vertex.

For vertices \(x,c\in V\), let \(d(x,c)\) denote shortest-path distance in \(G\). 
Each vertex \(v\in V\) has a unique identifier \(\operatorname{id}(v)\in\{1,\dots,n\}\), used only for deterministic tie-breaking. 
Given a candidate set \(K\), voter \(x\) ranks each candidate $c \in K$ by the key 
$\kappa_x(c)=\bigl(d(x,c),\operatorname{id}(c)\bigr)$,
preferring smaller keys. 
Thus, distance ties in a voter's ranking are broken in favour of the smaller-ID candidate.

\subsection{Deterministic Graph-IRV.}
Given \(K\subseteq V\), IRV proceeds in rounds. 
In each round, every voter supports her most preferred remaining candidate. 
The candidate with the smallest plurality score is eliminated. 
If several candidates are tied for last place, the candidate with largest ID is eliminated. 
The final remaining candidate is denoted $\operatorname{IRV}(G,K)$.
When the graph is a tree \(T\), we write \(\operatorname{IRV}(T,K)\).

\begin{definition}[IRV exclusion zone]\label{def:exclusion-zone}
Consider the graph voting \(G=(V,E)\).A nonempty set \(S\subseteq V\) is an \emph{IRV exclusion zone} if, for every candidate set \(K\subseteq V\) with \(K\cap S\neq\emptyset\), the winner lies in \(S\):
$K\cap S\neq\emptyset
\quad\Longrightarrow\quad
\operatorname{IRV}(G,K)\in S.$
\end{definition}

Equivalently, once at least one candidate from \(S\) is present, all candidates outside \(S\) are excluded from winning. 
The set \(V\) is always an exclusion zone and is called the trivial exclusion zone. 
A proper subset \(S\subsetneq V\) that satisfies Definition~\ref{def:exclusion-zone} is called a nontrivial exclusion zone.

\paragraph{Minimum exclusion zone.}
The family of exclusion zones is nested: if \(S\) and \(S'\) are exclusion zones, then either \(S\subseteq S'\) or \(S'\subseteq S\). 
Indeed, if neither set contained the other, we could choose \(s\in S\setminus S'\) and \(s'\in S'\setminus S\). 
In the two-candidate election \(\{s,s'\}\), the winner cannot simultaneously lie in both \(S\) and \(S'\), contradicting that both are exclusion zones. 
Thus there is a unique inclusion-minimal exclusion zone, which we denote by \(S^\star\).

\paragraph{Computational problems.}
We study two exact computational problems.

\begin{description}
    \item[\textsc{IRV-Exclusion}.] Given a graph \(G=(V,E)\) and a set \(S\subseteq V\), decide whether \(S\) is an IRV exclusion zone.
    \item[\textsc{Min-IRV-Exclusion}.] Given a graph \(G=(V,E)\), compute the minimum IRV exclusion zone \(S^\star\).
\end{description}

Both problems are computationally hard on general graphs~\citep{Tomlinson25exclusion}. 
Our main positive results show that both become polynomial-time solvable when \(G\) is a tree.

\paragraph{Remark on tie-breaking.}
Our deterministic tie-breaking convention is used throughout the paper. 
It ensures that every election has a unique winner and, later, that every two-candidate election induces a unique directed edge in the pairwise-loss graph. 
The particular choice of using smaller IDs for ranking ties and larger IDs for elimination ties is not essential; what matters is that the tie-breaking rule is fixed and deterministic.\footnote{While, tie-breaking rules in IRV have attracted the attention of the computational social choice community, c.f. \cite{brams2015satisfaction, delemazure2024generalizing}, our fixed deterministic tie-breaking rule based on IDs is enough for our purpose to guarantee a unique winner for each candidate configuration.}
\section{Tree Tractability via the \textsc{Kill} Test}
\label{sec:kill}

In this section, we prove that exclusion-zone verification is polynomial-time solvable on trees. 
The central idea is to replace the direct universal quantification over all candidate sets by a local membership test, called \textsc{Kill}. 
This test asks whether a designated candidate can be forced to lose using opponents only from a specified allowed region.

Throughout the section, let \(T=(V,E)\) be an unweighted tree. 
All preferences and tie-breaking rules are as defined in Section~\ref{sec:model}.

\begin{definition}[\textsc{Kill}]
\label{def:kill}
Fix a vertex \(u\in V\) and an allowed opponent region \(A\subseteq V\setminus\{u\}\). 
We define $\textsc{Kill}(T,u,A)=\textsc{TRUE}$ if and only if there exists a candidate set \(K\) such that
$u\in K\subseteq A\cup\{u\}
\qquad\text{and}\qquad
\operatorname{IRV}(T,K)\neq u.$
\end{definition}

Thus \(\textsc{Kill}(T,u,A)\) asks whether candidate \(u\) can be made to lose when all other candidates are restricted to lie in \(A\).

\subsection{Reducing exclusion-zone verification to \textsc{Kill}}

We first show that \textsc{Kill} can be checked by considering only first-round eliminations. 
This is the main simplification that makes the tree dynamic programming possible.

\begin{lemma}[Round-1 reduction]
\label{lem:round1}
If \(\textsc{Kill}(T,u,A)\) is true, then there exists a witness candidate set \(K\), with
$u\in K\subseteq A\cup\{u\}$, such that \(u\) is eliminated in the first IRV round.
\end{lemma}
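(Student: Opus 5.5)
The plan is to truncate a losing election to the round in which \(u\) is removed. The key fact is that IRV is memoryless: each round depends only on the set of candidates still standing, not on the history of eliminations.

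First I will record that memorylessness precisely. Fix \(K\) and let \(K=K_1\supsetneq K_2\supsetneq\cdots\supsetneq K_m\), with \(|K_m|=1\), be the sets of remaining candidates at the start of each IRV round. In round \(r\), every voter \(x\) supports the candidate of \(K_r\) minimising the key \(\kappa_x(c)=(d(x,c),\operatorname{id}(c))\). This choice depends only on \(x\) and \(K_r\). Hence the plurality scores in round \(r\) of the election on \(K\) coincide with the first-round scores of the election on \(K_r\). The elimination rule (smallest score, ties broken by largest ID) is also a function of these scores and the IDs alone. So the candidate eliminated in round \(r\) of \(\operatorname{IRV}(T,K)\) is exactly the candidate eliminated in round \(1\) of \(\operatorname{IRV}(T,K_r)\).

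Next, suppose \(\textsc{Kill}(T,u,A)\) holds with witness \(K\), so \(u\in K\subseteq A\cup\{u\}\) and \(\operatorname{IRV}(T,K)\neq u\). Since \(u\in K\) but \(u\) is not the final survivor, \(u\) is eliminated in some round \(r\). At the start of that round \(u\in K_r\), and \(|K_r|\ge 2\) because an elimination takes place. I will set \(K':=K_r\). Then \(u\in K'\subseteq K\subseteq A\cup\{u\}\), so \(K'\) is an admissible candidate set. By the previous step, \(u\) is eliminated in the first round of the election on \(K'\), and in particular \(\operatorname{IRV}(T,K')\neq u\). This is the required witness.

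I do not expect a genuine obstacle here; the argument does not even use the tree structure. The only point needing care is confirming that tie-breaking introduces no history dependence. Ranking ties are broken by the fixed IDs inside \(\kappa_x\), and elimination ties by largest ID among the current candidates; both depend only on \(K_r\). I will state this explicitly so that the identification of round \(r\) of \(K\) with round \(1\) of \(K_r\) is airtight.
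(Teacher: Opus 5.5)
Your proposal is correct and matches the paper's proof: both restart the election on the set of candidates remaining at the start of the round in which \(u\) is eliminated, and observe that first-round scores and tie-breaking there coincide with that round of the original election, so \(u\) is eliminated in round \(1\). Your explicit remarks that \(|K_r|\ge 2\) and that neither tie-breaking rule depends on elimination history are details the paper leaves implicit.
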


\begin{proof}
Assume \(\textsc{Kill}(T,u,A)\) is true. 
Then there exists a candidate set \(K_0\) with
$u\in K_0\subseteq A\cup\{u\}
\qquad\text{and}\qquad
\operatorname{IRV}(T,K_0)\neq u$.

Run IRV on \(K_0\), and let \(t\geq 1\) be the round in which \(u\) is eliminated. 
Let \(R\) be the set of candidates remaining at the start of round \(t\). 
Then \(u\in R\subseteq K_0\).
Now consider a fresh election whose candidate set is exactly \(R\). 
For every voter, her first choice among candidates in \(R\) is exactly the same as her current first choice at the start of round \(t\) in the original election on \(K_0\). 
Therefore, the first-round plurality scores in the election on \(R\) are identical to the round-\(t\) plurality scores in the election on \(K_0\). 
Since the deterministic tie-breaking rule is also identical, the candidate eliminated in the first round of the election on \(R\) is again \(u\).
Finally, \(R\subseteq K_0\subseteq A\cup\{u\}\), so \(R\) is a valid witness candidate set for \(\textsc{Kill}(T,u,A)\). 
Thus \(u\) can be killed in round \(1\).
\end{proof}

Lemma~\ref{lem:round1} turns \textsc{Kill} into a first-round plurality feasibility problem. We only need to decide whether there is a valid placement of opponents that makes \(u\) a last-place candidate in the first round, with the deterministic tie-breaking rule eliminating \(u\).
The next lemma connects this membership test to exclusion-zone verification.

\begin{lemma}[Singleton-in-\(S\) reduction]
\label{lem:singleton-reduction}
Let \(S\subseteq V\). 
Then \(S\) is an exclusion zone if and only if $\forall u\in S,\qquad 
\textsc{Kill}(T,u,V\setminus S)=\textsc{FALSE}$.
\end{lemma}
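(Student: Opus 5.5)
The plan is to prove both directions directly from Definition~\ref{def:exclusion-zone}. The reverse direction reuses the truncation idea from the proof of Lemma~\ref{lem:round1}: restarting IRV from the set of candidates that survive to some round reproduces that round exactly. Throughout, \(S\) is nonempty, as in the definition of an exclusion zone.

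\emph{(\(\Rightarrow\))} I argue by contraposition. Suppose \(\textsc{Kill}(T,u,V\setminus S)=\textsc{TRUE}\) for some \(u\in S\). Then there is a witness \(K\) with \(u\in K\subseteq (V\setminus S)\cup\{u\}\) and \(\operatorname{IRV}(T,K)\neq u\).
\begin{itemize}
\item Since \(u\in K\), we have \(K\cap S=\{u\}\neq\emptyset\).
\item The winner is not \(u\), so it lies in \(K\setminus\{u\}\subseteq V\setminus S\).
\end{itemize}
Hence \(K\) is an election containing a candidate from \(S\) but won from outside \(S\), so \(S\) is not an exclusion zone.

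\emph{(\(\Leftarrow\))} Again by contraposition, suppose \(S\) is not an exclusion zone. Then some \(K\) has \(K\cap S\neq\emptyset\) and \(w=\operatorname{IRV}(T,K)\notin S\). Run IRV on \(K\).
\begin{itemize}
\item At the start the set of remaining candidates meets \(S\); at the end it is \(\{w\}\), which does not. Each round removes exactly one candidate.
\item Hence there is a round \(t\) whose remaining set \(R\) satisfies \(R\cap S=\{u\}\) for a single \(u\in S\), and \(u\) is eliminated in round \(t\). This is the round in which the last candidate from \(S\) is eliminated.
\item Then \(u\in R\subseteq (V\setminus S)\cup\{u\}\).
\end{itemize}
Consider the fresh election on \(R\), exactly as in the proof of Lemma~\ref{lem:round1}. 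Each voter's top choice in \(R\) equals her round-\(t\) choice in the election on \(K\), so the first-round scores coincide with the round-\(t\) scores. The elimination tie-breaking is identical, so \(u\) is eliminated in round~1 and \(\operatorname{IRV}(T,R)\neq u\). Thus \(R\) witnesses \(\textsc{Kill}(T,u,V\setminus S)=\textsc{TRUE}\) for this \(u\in S\).

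The only step needing care is the choice of round \(t\): it must be the elimination of the \emph{last} candidate from \(S\), so that the restricted set \(R\) contains exactly one vertex of \(S\). Once that round is fixed, the argument is the round-1 truncation already established.
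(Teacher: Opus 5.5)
Your proof is correct and follows the paper's argument essentially verbatim. The forward direction uses that \(u\) is the only candidate of \(S\) in the witness, and the reverse direction restarts IRV at the round where the last candidate from \(S\) is eliminated. You add two small points of care the paper leaves implicit: the counting argument showing that round exists, and the remark that \(S\) must be nonempty (for \(S=\emptyset\) the right-hand side holds vacuously while \(\emptyset\) is not a zone by definition).
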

\begin{proof}
First suppose there exists \(u\in S\) such that
$\textsc{Kill}(T,u,V\setminus S)=\textsc{TRUE}$.
Then there is a candidate set
$K\subseteq (V\setminus S)\cup\{u\}$
with \(u\in K\) and \(\operatorname{IRV}(T,K)\neq u\). 
Since \(u\) is the only candidate in \(K\cap S\), the winner must lie outside \(S\). 
Thus \(K\cap S\neq\emptyset\) but \(\operatorname{IRV}(T,K)\notin S\), so \(S\) is not an exclusion zone.
Conversely, suppose \(S\) is not an exclusion zone. 
Then there exists a candidate set \(K\subseteq V\) such that
$K\cap S\neq\emptyset
\qquad\text{and}\qquad
\operatorname{IRV}(T,K)\notin S$.
Run IRV on \(K\). 
During this process, candidates from \(S\) are eventually eliminated, since the final winner lies outside \(S\). 
Let \(u\) be the last remaining candidate from \(S\). 
At the round in which \(u\) is eliminated, all other remaining candidates lie in \(V\setminus S\). 
Restarting the election from that round gives a candidate set \(R\) with
$u\in R\subseteq (V\setminus S)\cup\{u\}$
in which \(u\) loses. 
Therefore \(\textsc{Kill}(T,u,V\setminus S)=\textsc{TRUE}\).
\end{proof}

Lemma~\ref{lem:singleton-reduction} reduces exclusion-zone verification to \(|S|\) calls to \textsc{Kill}. 
It remains to prove that \textsc{Kill} is polynomial-time solvable on trees.

\subsection{Structural ingredients for the tree dynamic program}

Fix an instance \((T,u,A)\), and root the tree at \(u\). 
For a vertex \(x\neq u\), let \(T_x\) denote the subtree rooted at \(x\), and let \(p(x)\) denote the parent of \(x\).

The first structural observation is that we may restrict attention to antichain placements of opponents.
A set \(F\subseteq V\) is an \emph{antichain} in the rooted tree if no vertex of \(F\) is an ancestor of another vertex of \(F\).

\begin{lemma}[Antichain normal form]
\label{lem:antichain}
If \(\textsc{Kill}(T,u,A)\) is true, then there exists a witness of the form
$K=\{u\}\cup F,$
where \(F\subseteq A\) is an antichain in the rooted tree, such that \(u\) is eliminated in the first IRV round.
\end{lemma}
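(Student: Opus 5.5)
The plan is to start from the round-1 witness of Lemma~\ref{lem:round1} and prune it until the opponents form an antichain. At each step I will show that the pruning preserves the property that \(u\) is eliminated in round~1. By Lemma~\ref{lem:round1} there is \(K_0=\{u\}\cup F_0\) with \(F_0\subseteq A\) in which \(u\) is eliminated in the first round. If \(F_0\) is an antichain we are done. Otherwise, pick \(x,y\in F_0\) with \(x\) a proper ancestor of \(y\) in the tree rooted at \(u\), and delete the descendant \(y\). Each deletion strictly decreases \(|F|\), and \(F\) never becomes empty because the ancestor \(x\) is kept. So the process terminates in an antichain \(F\subseteq F_0\subseteq A\), and it suffices to show that a single deletion preserves round-1 elimination of \(u\).

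The key step is a locality claim: every voter \(z\) whose first choice in \(K\) is \(y\) lies in \(T_x\). Suppose instead that \(z\notin T_x\). Since \(y\in T_x\), the unique \(z\)--\(y\) path in the tree enters \(T_x\) through \(x\). Hence \(d(z,x)<d(z,y)\), and \(z\) would strictly prefer \(x\) to \(y\), a contradiction.

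Conversely, every \(z\in T_x\) has its path to the root \(u\) passing through \(x\neq u\). Hence \(d(z,x)<d(z,u)\), so \(z\) strictly prefers \(x\) to \(u\) regardless of IDs. Consequently, once \(y\) is removed, each of \(y\)'s former voters transfers to some remaining opponent (possibly \(x\)) and never to \(u\). Every other voter keeps the same first choice, since removing a candidate she did not rank first does not change her top choice.

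It remains to compare first-round scores in \(K'=K\setminus\{y\}\) with those in \(K\). The score of \(u\) is unchanged, and the score of every remaining opponent weakly increases. In \(K\), \(u\) was eliminated: its score \(s_u\) was minimal, and among the candidates tied at \(s_u\), \(u\) had the largest ID. In \(K'\), \(u\) still has score \(s_u\), which is still minimal. The set of opponents tied with \(u\) at score \(s_u\) is a subset of the earlier tied set, because scores only went up and \(y\) is gone. Therefore the largest-ID elimination rule still removes \(u\) in round~1.

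I expect the only delicate point to be the locality claim together with its tie-breaking subtleties. Both inequalities above are strict distance inequalities, so the ID-based tie-breaking in voters' rankings plays no role there. The elimination tie-break is handled by the subset argument for the tied set.
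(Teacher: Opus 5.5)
Your proposal is correct and follows the same route as the paper. You start from the round-1 witness, repeatedly delete a descendant opponent, and show that none of its voters move to \(u\) (they lie in \(T_x\) and strictly prefer \(x\) to \(u\)); you then conclude from the unchanged score of \(u\), weakly increasing opponent scores, and a shrinking tied set. Your justification of the locality step, that the \(z\)--\(y\) path enters \(T_x\) through \(x\), is the right one; the paper instead routes this path through \(u\), which only holds for voters outside the child subtree of \(u\) containing \(x\).
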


\begin{proof}[Proof sketch]
By Lemma~\ref{lem:round1}, there is a witness \(K_0=\{u\}\cup F_0\) in which \(u\) is eliminated in round \(1\). 
If two opponents \(a,b\in F_0\) lie on the same root-to-leaf path, with \(a\) an ancestor of \(b\), then delete the descendant \(b\). 
Any voter who previously voted for \(b\) lies in the subtree rooted at \(a\), and therefore still prefers \(a\) to \(u\). 
Thus, deleting \(b\) does not increase \(u\)'s score. 
It can also not decrease the score of any remaining opponent. 
Hence \(u\) remains a first-round loser. 
Repeating this deletion step yields an antichain \(F\subseteq A\). 
The full proof is given in Appendix~\ref{app:proofs-antichain}.
\end{proof}

The antichain normal form prevents redundant nested opponent placements. 
The next lemma explains why a subtree can summarise the entire outside world by a single representative.

\begin{lemma}[Boundary collapse]
\label{lem:boundary}
Let \(T_x\) be a rooted subtree, and let \(c_1,c_2\notin T_x\). 
Then for every voter \(v\in T_x\),
$\kappa_v(c_1)\leq \kappa_v(c_2)
\quad\Longleftrightarrow\quad
\kappa_x(c_1)\leq \kappa_x(c_2)$.
In particular, all voters in \(T_x\) agree on the best outside candidate.
\end{lemma}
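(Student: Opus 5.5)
The plan is to use the unique-path structure of the tree. Every path from a vertex outside \(T_x\) to a vertex inside \(T_x\) passes through the edge \(\{x,p(x)\}\), and hence through \(x\). We need \(x\neq\) root; if \(T_x=T\) there are no outside candidates and the statement is vacuous. So for any \(v\in T_x\) and any \(c\notin T_x\), the path from \(v\) to \(c\) is the path from \(v\) to \(x\) followed by the path from \(x\) to \(c\). Since \(T\) is a tree and distances are path lengths,
\[
d(v,c)=d(v,x)+d(x,c).
\]
The first key step is to prove this identity. Let \(P\) be the unique \(v\)–\(c\) path. Removing the edge \(\{x,p(x)\}\) disconnects \(T_x\) from \(V\setminus T_x\), so \(P\) must use that edge. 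The only vertex of \(T_x\) incident to it is \(x\), so \(P\) visits \(x\). The identity then follows because subpaths of a unique path are the unique paths between their endpoints.

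The second step is to compare keys. For \(c_1,c_2\notin T_x\), the identity gives
\[
d(v,c_1)-d(v,c_2)=d(x,c_1)-d(x,c_2),
\]
since the common term \(d(v,x)\) cancels. The key \(\kappa_v(c)=(d(v,c),\operatorname{id}(c))\) is compared lexicographically, and the identifier component does not depend on the voter. So \(\kappa_v(c_1)\le\kappa_v(c_2)\) holds iff either \(d(v,c_1)<d(v,c_2)\), or the distances are equal and \(\operatorname{id}(c_1)\le\operatorname{id}(c_2)\). Both conditions are equivalent to the corresponding conditions with \(x\) in place of \(v\), by the distance-difference identity. This proves the equivalence.

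For the ``in particular'' part, note that among any finite set of outside candidates, each voter's most preferred one is the minimizer of \(\kappa_v\). The minimizer is unique because identifiers are distinct, so keys are distinct. By the equivalence, \(\kappa_v\) and \(\kappa_x\) induce the same total order on outside candidates, so every \(v\in T_x\) has the same minimizer as \(x\).

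The only real obstacle is the path-decomposition step; everything after it is routine. I would argue it by the cut-edge property of trees as described, rather than by an induction on depth.
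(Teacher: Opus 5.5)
Your proof is correct and follows the same route as the paper: the path decomposition $d(v,c)=d(v,x)+d(x,c)$ for $v\in T_x$ and $c\notin T_x$, cancellation of the common term $d(v,x)$, and the observation that the voter-independent ID tie-break preserves the lexicographic comparison. Your cut-edge argument for why the path passes through $x$, and your remark that distinct IDs make the best outside candidate unique, just spell out details the paper leaves implicit.
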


\begin{proof}[Proof sketch]
For any \(v\in T_x\) and any outside candidate \(c\notin T_x\), the unique path from \(v\) to \(c\) passes through \(x\). 
Hence
$d(v,c)=d(v,x)+d(x,c)$.
The term \(d(v,x)\) is the same for all outside candidates \(c\), so the ordering of outside candidates from the perspective of \(v\) is the same as the ordering from the boundary vertex \(x\). 
The same conclusion holds after applying deterministic ID tie-breaking. 
The full proof is given in Appendix~\ref{app:proofs-boundary}.
\end{proof}

The final structural lemma controls how votes move across child subtrees at a merge node.

\begin{lemma}[Two-recipient lemma]
\label{lem:two-recipient}
Consider a node \(x\) with child subtrees \(T_{y_1},\dots,T_{y_d}\), and suppose no candidate is placed at \(x\). 
Let \(F\) be the set of internal candidates placed inside the child subtrees, and let \(e\notin T_x\) be the outside representative. 
Let \(r_1\) be the best candidate in \(F\) from the viewpoint of \(x\), and let \(r_2\) be the best candidate in \(F\setminus T_{y^\star}\), where \(T_{y^\star}\) is the child subtree containing \(r_1\). 
If this set is empty, set \(r_2=\bot\). 
Then votes leaving any child subtree can be transferred only to \(e\) and to one of \(r_1,r_2\).
\end{lemma}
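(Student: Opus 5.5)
The plan is to reduce the claim to Lemma~\ref{lem:boundary} applied to the subtrees $T_{y_i}$. Fix a child $y_i$ and a voter $v\in T_{y_i}$ whose vote leaves $T_{y_i}$. By this we mean that $v$'s most preferred remaining candidate lies outside $T_{y_i}$. This happens in particular when no internal candidate remains in $T_{y_i}$, or when an outside candidate beats every remaining candidate of $T_{y_i}$.

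The candidates outside $T_{y_i}$ fall into two groups. The first group is $F\setminus T_{y_i}$, the internal candidates in sibling subtrees. The second group is the candidates outside $T_x$, which the dynamic program summarises by the single representative $e$. Lemma~\ref{lem:boundary} applied to $T_x$ shows that every voter in $T_x$ ranks the outside-$T_x$ candidates exactly as $x$ does. So the best of them, as seen from any such voter, is $e$. Lemma~\ref{lem:boundary} applied to $T_{y_i}$ then shows that all voters of $T_{y_i}$ agree with $y_i$ on the best candidate outside $T_{y_i}$. Consequently every vote leaving $T_{y_i}$ goes to the same recipient, namely $y_i$'s favourite among $(F\setminus T_{y_i})\cup\{e\}$.

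Next I compare $y_i$'s view with $x$'s view. Every candidate outside $T_{y_i}$ is reached from $y_i$ through $x$, so $d(y_i,c)=1+d(x,c)$. Since IDs do not depend on the voter, the order of keys $\kappa_{y_i}$ on these candidates equals the order of keys $\kappa_x$. Hence the recipient is simply $x$'s best candidate in $(F\setminus T_{y_i})\cup\{e\}$. This is $e$ or $x$'s best candidate in $F\setminus T_{y_i}$.

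It remains to identify $x$'s best candidate in $F\setminus T_{y_i}$, and there are two cases.
\begin{itemize}
\item If $T_{y_i}\neq T_{y^\star}$, then $r_1\in F\setminus T_{y_i}$. Since $r_1$ is $x$'s best in all of $F$, it is also best in the subset, so the answer is $r_1$.
\item If $T_{y_i}=T_{y^\star}$, then $F\setminus T_{y_i}=F\setminus T_{y^\star}$, and by definition $x$'s best there is $r_2$. If this set is empty ($r_2=\bot$), only $e$ can receive.
\end{itemize}
In both cases the recipient lies in $\{e,r_1,r_2\}$, which is the claim.

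The main obstacle is interpretive rather than technical: the statement must be read with $F$ as the current remaining internal candidates and $e$ as the best remaining candidate outside $T_x$. Under that reading the argument above applies in any round. One also has to check that ID tie-breaking is preserved under the shift by the common distance term, but this is immediate because $\kappa$ compares distances lexicographically before IDs.
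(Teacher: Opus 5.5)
Your proof is correct and follows essentially the same route as the paper's. It applies boundary collapse to $T_{y_i}$, uses $d(y_i,c)=1+d(x,c)$ for every candidate outside $T_{y_i}$ to pass to $x$'s viewpoint, and splits on whether $T_{y_i}=T_{y^\star}$. The only difference is cosmetic: you compare $e$ and the sibling-subtree candidates in one step, where the paper handles them separately.
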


\begin{proof}[Proof sketch]
Fix a child \(y\) of \(x\). 
By Lemma~\ref{lem:boundary}, all voters in \(T_y\) agree on the best candidate outside \(T_y\). 
Among candidates outside \(T_y\), the best external candidate is represented by \(e\). 
The best internal candidate outside \(T_y\) is \(r_1\), unless \(r_1\in T_y\); in that exceptional case it is \(r_2\). 
Therefore any vote leaving \(T_y\) can only go to \(e\) and to one of \(r_1,r_2\). 
The full proof is given in Appendix~\ref{app:proofs-two}.
\end{proof}

Lemma~\ref{lem:two-recipient} is the reason the dynamic program has polynomial-size merge states. 
Without it, combining many child subtrees would require tracking exponentially many possible cross-subtree recipients.

\subsection{The \textsc{Kill} dynamic program}

We now describe the dynamic program. 
The full recurrence, including base cases, compatibility conditions, merge transitions, and root aggregation, is given in Appendix~\ref{app:dp-recurrence}. 
Here we state the invariant and explain why the number of states and transitions is polynomial.
For each non-root vertex \(x\) and outside representative \(e\notin T_x\), the table
$DP[x,e]$
stores all feasible summaries of round-1 plurality scores induced by candidate sets of the form
$F_x\cup\{e\}$,
where \(F_x\subseteq A\cap T_x\) is an antichain.
A summary is a tuple
$(r_1,v_1,\ r_2,v_2,\ m_{\mathrm{rest}},M_{\mathrm{rest}},\ a)$.
The meaning of the tuple is as follows:

    a) \(r_1\) is the best internal candidate in \(F_x\) from the viewpoint of \(x\), or \(\bot\) if \(F_x=\emptyset\); b)  \(r_2\) is the best internal candidate not lying in the child subtree that contains \(r_1\), or \(\bot\) if no such candidate exists;    c) \(v_1\) and \(v_2\) are the first-round vote totals, among voters in \(T_x\), received by \(r_1\) and \(r_2\), respectively; d) \(a\) is the number of voters in \(T_x\) who vote for the outside representative \(e\); e) \(m_{\mathrm{rest}}\) is the minimum first-round score among all other internal candidates in \(F_x\setminus\{r_1,r_2\}\);  f) \(M_{\mathrm{rest}}\) is the largest ID among candidates attaining this minimum score.
The pair \((m_{\mathrm{rest}},M_{\mathrm{rest}})\) is needed because the elimination rule breaks last-place ties by eliminating the largest-ID candidate.

\paragraph{Leaf states.}
If \(x\) is a leaf, there are at most two possibilities. 
If \(x\notin A\), then no internal opponent can be placed at \(x\), and the unique voter in \(T_x\) votes for the outside representative \(e\). 
If \(x\in A\), there is an additional possibility in which \(x\) itself is placed as an opponent and receives the leaf voter's vote.

\paragraph{Internal states.}
At an internal node \(x\), there are two cases.

First, if \(x\in A\), we may place an opponent at \(x\). 
Then the antichain condition forbids placing any opponent in a descendant subtree. 
All voters in \(T_x\) prefer \(x\) to any outside candidate, so this case produces a single summary in which \(x\) receives all \(|T_x|\) votes from \(T_x\).

Second, if no candidate is placed at \(x\), the summaries of the child subtrees are merged. 
The merge guesses the global boundary-best candidates \(r_1,r_2\) and combines child summaries only when they are compatible with this guess. 
Lemma~\ref{lem:boundary} determines the outside representative seen by each child, and Lemma~\ref{lem:two-recipient} ensures that cross-child votes affect only \(r_1\), \(r_2\), and the outside representative. 
Thus, the merge can be implemented by a knapsack-style feasibility DP over vote totals and minimum-score summaries.

\paragraph{Root aggregation.}
At the root \(u\), the outside representative for each child subtree is determined by the fact that \(u\) is always present as a candidate. 
The child summaries are aggregated to compute a) \(v_u\), the first-round score of \(u\);
    b) \(m_{\mathrm{opp}}\), the minimum first-round score among all opponents;
    c) \(M_{\mathrm{opp}}\), the largest ID among opponents attaining \(m_{\mathrm{opp}}\).
A feasible aggregation is accepted if there is at least one opponent and
$v_u < m_{\mathrm{opp}}$
$\quad\text{or}\quad
\bigl(v_u=m_{\mathrm{opp}} \text{ and } \operatorname{id}(u)>M_{\mathrm{opp}}\bigr)$.
This is exactly the condition that \(u\) is eliminated in the first IRV round.

\begin{theorem}[Polynomial-time \textsc{Kill} on trees]
\label{thm:kill}
For every tree \(T=(V,E)\), vertex \(u\in V\), and allowed opponent region \(A\subseteq V\setminus\{u\}\), the value of \(\textsc{Kill}(T,u,A)\) can be decided in polynomial time. 
A conservative implementation runs in \(O(n^{13})\) time and \(O(n^{10})\) space.
\end{theorem}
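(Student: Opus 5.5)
By Lemma~\ref{lem:round1} and Lemma~\ref{lem:antichain}, it suffices to decide whether there is an antichain $F\subseteq A$ (in the tree rooted at $u$) such that $u$ is eliminated in the first round of the election on $K=\{u\}\cup F$. The plan is therefore to prove two things for the dynamic program of the previous subsection: it is sound and complete for this first-round question, and it has polynomially many states and transitions.

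\textbf{Correctness.} I will proceed by induction on the height of $x$. The invariant is that $DP[x,e]$ contains exactly the summaries $(r_1,v_1,r_2,v_2,m_{\mathrm{rest}},M_{\mathrm{rest}},a)$ realised by antichains $F_x\subseteq A\cap T_x$ together with outside candidate $e$.
\begin{itemize}
\item \emph{Leaf case.} This is immediate from the leaf rule.
\item \emph{Case $x\in F_x$.} Every voter in $T_x$ prefers $x$ to any candidate outside $T_x$, and the antichain condition forces $F_x=\{x\}$. This yields the single summary in which $x$ receives all $|T_x|$ votes.
\item \emph{Case $x\notin F_x$.} Here I apply Lemma~\ref{lem:boundary} to each child $y$. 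All voters in $T_y$ rank outside candidates as $y$ does, so the best outside candidate for $T_y$ is one of $e$, $r_1$, or $r_2$. By Lemma~\ref{lem:two-recipient}, cross-child votes reach only these three recipients. The child summary of $T_y$ computed with that representative is then exactly its contribution. The global $r_1,r_2$, their totals, $a$, and the pair $(m_{\mathrm{rest}},M_{\mathrm{rest}})$ are obtained by summing and by taking lexicographic minima. Non-best candidates of a child subtree receive votes only from inside that subtree, so their scores are already final in the child summary.
\end{itemize}

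The main subtlety is the \emph{guess-and-verify} step. The merge guesses the global $r_1,r_2$ and the child index $y^\star$. It must then check that every child's own best internal candidate is no better, from $x$'s viewpoint, than the guess. This comparison is a consistency condition on keys $\kappa_x$, again justified by Lemma~\ref{lem:boundary}. A second point needs care: a child's $r_1$ or $r_2$ may be demoted into the ``rest'' bucket at $x$. For this reason the child summary must expose its top scores, so that they can be folded into $(m_{\mathrm{rest}},M_{\mathrm{rest}})$.

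\textbf{Root.} At the root, each child $y$ of $u$ is evaluated with outside representative $u$ or an internal best candidate, exactly as in an internal merge with $e=u$. The acceptance condition $v_u<m_{\mathrm{opp}}$, or equality with $\operatorname{id}(u)>M_{\mathrm{opp}}$, is precisely the statement that $u$ is eliminated under the fixed tie-breaking rule. Together with Lemma~\ref{lem:round1} this gives correctness of $\textsc{Kill}$.

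\textbf{Complexity.} A summary consists of two vertices, three vote counts, one minimum score, and one ID, each taking $O(n)$ values. Hence $|DP[x,e]|=O(n^7)$, and over all pairs $(x,e)$ the space is at most $O(n^9)$, or $O(n^{10})$ if the guessed pair is also stored. The merge processes children one at a time as a knapsack. Its intermediate state is the guessed $(r_1,r_2)$ together with the partial sums $(v_1,v_2,a)$ and the running $(m,M)$. Each step combines this state with an $O(n)$-size choice from the child. The number of such steps summed over the whole tree is $O(n)$, which gives a polynomial bound of roughly $O(n^{13})$.

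I expect the main obstacle to be making the merge genuinely polynomial. Each child's summary depends on which of the three representatives it sees, and a naive product over children is exponential. The fix is to fix the guessed $(r_1,r_2,y^\star)$ up front and then process children sequentially, retaining only aggregated totals. Lemma~\ref{lem:two-recipient} makes this possible, because it guarantees that no other cross-child information is needed.
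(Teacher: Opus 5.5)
Your proposal is correct and follows the paper's argument essentially step for step: the round-1 and antichain reductions, the $DP[x,e]$ invariant, the guess-and-verify merge of $(r_1,r_2)$ justified by boundary collapse and the two-recipient lemma, the knapsack over children with demoted child representatives folded into $(m_{\mathrm{rest}},M_{\mathrm{rest}})$, and the root elimination test. One correction to your count: once $(e,r_1,r_2)$ is fixed, a child contributes up to $O(n^4)$ distinct summaries $(\Delta_1,\Delta_2,\Delta_e,m_y,M_y)$, not $O(n)$, and it is this, together with the $O(n^5)$ inner states, the $O(n^3)$ choices of $(e,r_1,r_2)$ and $\sum_x d_x=O(n)$, that gives the $O(n^{13})$ bound.
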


\begin{proof}[Proof sketch]
By Lemma~\ref{lem:round1}, it suffices to search for a candidate set in which \(u\) is eliminated in the first round. 
By Lemma~\ref{lem:antichain}, we may restrict attention to antichain opponent placements.

For each pair \((x,e)\), the number of possible summaries is polynomial in \(n\): the candidates \(r_1,r_2\) each range over \(V\cup\{\bot\}\), and the vote counts \(v_1,v_2,a,m_{\mathrm{rest}}\) range over \(\{0,\dots,n\}\), with \(M_{\mathrm{rest}}\) ranging over \(V\cup\{-\infty\}\). 
Hence the number of possible table entries is polynomial.

The only potentially expensive operation is merging child subtrees. 
However, by Lemma~\ref{lem:two-recipient}, once the global \(r_1,r_2\) are fixed, every child contributes votes only to \(r_1\), \(r_2\), the outside representative, or to the stored minimum among all other candidates. 
Thus the merge is a finite knapsack-style feasibility computation over polynomially many integer vote totals and candidate IDs. 
Finally, the root aggregation checks exactly the deterministic first-round elimination condition for \(u\). 
The full recurrence and state-counting argument are given in Appendix~\ref{app:dp-recurrence}.
\end{proof}

\begin{corollary}[Polynomial-time exclusion-zone verification on trees]
\label{cor:tree-verification}
Given a tree \(T=(V,E)\) and a set \(S\subseteq V\), whether \(S\) is an IRV exclusion zone can be decided in polynomial time.
\end{corollary}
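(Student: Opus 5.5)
The plan is to combine Lemma~\ref{lem:singleton-reduction} with Theorem~\ref{thm:kill} directly. Given a tree \(T=(V,E)\) and a set \(S\subseteq V\), the algorithm first handles the degenerate input \(S=\emptyset\). Definition~\ref{def:exclusion-zone} requires exclusion zones to be nonempty, so the answer there is immediately negative.

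Otherwise, set \(A=V\setminus S\). For each \(u\in S\), note that \(A\subseteq V\setminus\{u\}\), since \(u\in S\). Hence \((T,u,A)\) is a legal instance of \textsc{Kill} in the sense of Definition~\ref{def:kill}. The algorithm invokes the polynomial-time procedure of Theorem~\ref{thm:kill} on each such instance. It outputs ``yes'' exactly when every call returns \textsc{FALSE}.

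Correctness is precisely the equivalence in Lemma~\ref{lem:singleton-reduction}: \(S\) is an exclusion zone if and only if \(\textsc{Kill}(T,u,V\setminus S)=\textsc{FALSE}\) for all \(u\in S\). The running time is at most \(|S|\le n\) calls to the \textsc{Kill} procedure, plus \(O(n)\) preprocessing to form \(A\). With the conservative bound of Theorem~\ref{thm:kill}, this gives \(O(n^{14})\) time and \(O(n^{10})\) space, since the tables can be reused across calls. This is polynomial.

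There is no real obstacle at this stage; all the difficulty sits inside Theorem~\ref{thm:kill}. The only point to check is that the reduction's hypotheses match the theorem's input format. In particular, the allowed region must exclude \(u\), and the argument of Lemma~\ref{lem:singleton-reduction}, which restarts the election from an intermediate round, must be valid on trees under the same deterministic tie-breaking. Both hold here.
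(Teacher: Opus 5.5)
Your proposal is correct and matches the paper's proof: apply Lemma~\ref{lem:singleton-reduction} and make at most \(n\) calls to the polynomial-time \textsc{Kill} procedure of Theorem~\ref{thm:kill}. Your separate handling of \(S=\emptyset\) is a small but valid refinement, since the lemma's condition holds vacuously there while the definition requires exclusion zones to be nonempty.
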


\begin{proof}
By Lemma~\ref{lem:singleton-reduction}, \(S\) is an exclusion zone if and only if
$\textsc{Kill}(T,u,V\setminus S)=\textsc{FALSE}$
for every \(u\in S\). 
There are at most \(n\) such vertices \(u\), and each \textsc{Kill} query is decidable in polynomial time by Theorem~\ref{thm:kill}.
\end{proof}
\section{Computing the Minimum Exclusion Zone}
\label{sec:min-zone}

The previous section gives a polynomial-time verifier for a proposed exclusion zone on a tree. 
We now show how to turn this verifier into a polynomial-time algorithm for computing the minimum exclusion zone. 
The difficulty is that, a priori, the minimum zone could be any one of exponentially many subsets of \(V\). 
We avoid this search by showing that every exclusion zone is generated by a single vertex in a pairwise-loss tournament.
The optimization problem appears harder because there are \(2^n\) possible subsets of vertices. 
The key observation is that most of these subsets can never be exclusion zones. 
A necessary condition is already visible from two-candidate elections.

Indeed, suppose \(u\in S\), and suppose that in the two-candidate election between \(u\) and \(v\), candidate \(v\) beats \(u\). 
Then \(v\) must also belong to \(S\); otherwise the candidate set \(\{u,v\}\) would intersect \(S\), but its winner would lie outside \(S\), contradicting the definition of an exclusion zone. 
Thus, every exclusion zone is closed under the directed pairwise-loss relation. 
This simple observation collapses the search space dramatically. Instead of considering arbitrary subsets, we only need to consider reachable closures in the pairwise-loss graph.

\begin{definition}[Pairwise-loss graph]
\label{def:pairwise-loss}
Let \(T=(V,E)\) be a tree. 
The \emph{pairwise-loss graph} \(L(T)\) is the directed graph on vertex set \(V\) with an edge \(x\to y\) whenever \(x\) loses the two-candidate election \(\{x,y\}\), equivalently when \(\operatorname{IRV}(T,\{x,y\})=y\).
\end{definition}

Because tie-breaking is deterministic, every two-candidate election has a unique winner. 
Thus, for every distinct pair \(x,y\in V\), exactly one of \(x\to y\) or \(y\to x\) is present, so \(L(T)\) is a tournament.

\begin{definition}[Pairwise-loss closure]
For \(A\subseteq V\), let \(\operatorname{cl}(A)\) be the set of vertices reachable from \(A\) in \(L(T)\), where reachability includes paths of length zero. 
For a singleton \(\{v\}\), write \(\operatorname{cl}(v)\).
\end{definition}

\begin{theorem}[Exclusion zones are pairwise-loss closed]
\label{thm:loss-closed}
If \(S\subseteq V\) is an exclusion zone, \(u\in S\), and \(u\to v\) in \(L(T)\), then \(v\in S\). 
Equivalently,
$\operatorname{cl}(S)=S$.
\end{theorem}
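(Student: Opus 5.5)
The argument uses only Definition~\ref{def:exclusion-zone} applied to two-candidate elections, and then iterates along paths in \(L(T)\). First I establish the edge-wise statement. Let \(S\) be an exclusion zone, \(u\in S\), and \(u\to v\) in \(L(T)\). By Definition~\ref{def:pairwise-loss}, \(v\neq u\) and \(\operatorname{IRV}(T,\{u,v\})=v\). Consider the candidate set \(K=\{u,v\}\). Since \(u\in K\cap S\), the set \(K\) meets \(S\). The exclusion-zone property then forces \(\operatorname{IRV}(T,K)\in S\), that is, \(v\in S\). No structural facts about trees or the \textsc{Kill} machinery are needed. Deterministic tie-breaking matters only in that it makes \(\operatorname{IRV}(T,\{u,v\})\) well defined, and this is already built into the definition of \(L(T)\).

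Next I derive the closure identity \(\operatorname{cl}(S)=S\). The inclusion \(S\subseteq\operatorname{cl}(S)\) holds because reachability includes paths of length zero. For the reverse inclusion, take \(w\in\operatorname{cl}(S)\). Then there is a directed path \(s=x_0\to x_1\to\cdots\to x_k=w\) in \(L(T)\) with \(s\in S\). I argue by induction on \(k\). The case \(k=0\) is trivial. For the inductive step, if \(x_{i}\in S\), the edge-wise statement applied to the edge \(x_i\to x_{i+1}\) gives \(x_{i+1}\in S\). Hence \(w\in S\).

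I also check the converse direction of the claimed equivalence. If \(\operatorname{cl}(S)=S\), then for any \(u\in S\) with \(u\to v\) we have \(v\in\operatorname{cl}(S)=S\). So closure under single edges and \(\operatorname{cl}(S)=S\) are the same condition for any set, and the ``equivalently'' in Theorem~\ref{thm:loss-closed} is justified.

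I do not expect a real obstacle here; the only point needing care is stating the induction on path length cleanly. It is also worth noting, for later use, that this is purely a necessary condition. In the minimum-zone algorithm, sufficiency will still have to be verified via Corollary~\ref{cor:tree-verification}.
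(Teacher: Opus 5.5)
Your proof is correct and follows the paper's argument: apply the exclusion-zone property to the two-candidate election \(\{u,v\}\), then propagate along directed paths in \(L(T)\) to get \(\operatorname{cl}(S)\subseteq S\), with \(S\subseteq\operatorname{cl}(S)\) immediate from length-zero paths. Your explicit check of the converse direction of the ``equivalently'' (that \(\operatorname{cl}(S)=S\) implies closure under single edges) is a small addition that the paper leaves implicit.
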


\begin{proof}[Proof sketch]
If \(u\in S\) and \(u\to v\), then in the two-candidate election \(\{u,v\}\), candidate \(v\) wins. 
Since this candidate set intersects \(S\), the exclusion-zone property forces the winner \(v\) to lie in \(S\). 
Applying this argument along directed paths gives \(\operatorname{cl}(S)\subseteq S\), while \(S\subseteq\operatorname{cl}(S)\) is immediate.
\end{proof}

\begin{theorem}[Closed sets in tournaments are singly generated]
\label{thm:tournament-generator}
Let \(L\) be a tournament, and let \(S\neq\emptyset\) be closed under outgoing edges: whenever \(x\in S\) and \(x\to y\), we also have \(y\in S\). 
Then there exists \(s\in S\) such that
$\operatorname{cl}(s)=S$.
\end{theorem}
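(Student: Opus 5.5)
The plan is to pick \(s\) as a vertex of \(S\) from which every other vertex of \(S\) is reachable, using the classical fact that every tournament has a king, equivalently a vertex reaching all others. First observe that for any \(s\in S\), closedness of \(S\) under outgoing edges gives \(\operatorname{cl}(s)\subseteq S\): induct on the length of a directed path starting at \(s\), noting that each step \(x\to y\) with \(x\in S\) keeps \(y\in S\). So it suffices to find \(s\in S\) with \(S\subseteq\operatorname{cl}(s)\).

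For the reverse inclusion, restrict attention to the subtournament \(L[S]\) induced on \(S\). This is itself a tournament, since for distinct \(x,y\in S\) exactly one of \(x\to y\), \(y\to x\) holds in \(L\). Paths inside \(L[S]\) are paths in \(L\), so reachability in \(L[S]\) implies reachability in \(L\). It therefore remains to show that every nonempty finite tournament has a vertex reaching all others.

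For that step I would take \(s\in S\) with maximum out-degree in \(L[S]\). Suppose some \(z\in S\setminus\{s\}\) is not an out-neighbour of \(s\). Then \(z\to s\). If no out-neighbour \(w\) of \(s\) satisfied \(w\to z\), every out-neighbour of \(s\) would be an in-neighbour of \(z\), so \(z\to w\) for each of them. Together with \(z\to s\), this gives \(z\) out-degree strictly larger than that of \(s\), a contradiction. Hence every vertex of \(S\) is reached from \(s\) by a path of length at most \(2\) within \(S\), so \(S\subseteq\operatorname{cl}(s)\).

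Combining the two inclusions gives \(\operatorname{cl}(s)=S\). The only step with real content is the max-out-degree (king) argument. Its one subtlety is remembering to work in the induced tournament \(L[S]\) rather than in \(L\), so that the chosen \(s\) lies in \(S\); the other inclusion then follows immediately from the closure hypothesis.
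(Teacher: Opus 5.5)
Your proof is correct, and it takes a different route from the paper. The paper also works in the subtournament $L[S]$, but it does not use the king lemma. Instead it contracts strongly connected components and shows that all edges between two distinct components point the same way, so the condensation is an acyclic, hence transitive, tournament with a unique source component. It then picks $s$ anywhere in that source component. The inclusion $\operatorname{cl}(s)\subseteq S$ comes from closedness exactly as in your first step.

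Your max-out-degree argument is more elementary: it needs no component or condensation machinery and no cycle argument about edges between components. It also gives more. Every vertex of $S$ is reached from $s$ by a path of length at most $2$, and a generator can be read directly off the out-degrees in $L[S]$.

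The condensation viewpoint yields a complete description of the generators instead. The vertices $s\in S$ with $\operatorname{cl}(s)=S$ are exactly those in the source component of $L[S]$, because paths from $s$ never leave $S$ and no edge of $L[S]$ enters the source component from elsewhere. Your max-out-degree vertex necessarily lies in that component.

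Both arguments implicitly use finiteness, through the existence of a maximum out-degree or of a source component. This holds for $L(T)$. For the algorithm in Theorem~\ref{thm:min-zone-poly}, the choice between the two arguments does not matter, since every closure $\operatorname{cl}(v)$ is computed anyway.
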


\begin{proof}[Proof sketch]
Consider the subtournament \(L[S]\) and contract its strongly connected components. 
The condensation is acyclic; since \(L[S]\) is a tournament, the condensation is also a tournament and hence has a unique source component. 
Any vertex \(s\) in this source component reaches all vertices of \(S\). 
Since \(S\) is closed under outgoing edges, no vertex outside \(S\) is reachable from \(s\). 
Thus \(\operatorname{cl}(s)=S\). 
The full proof is in Appendix~\ref{app:tournament-generator}.
\end{proof}

\begin{corollary}[Every exclusion zone is a singleton closure]
\label{cor:zone-singleton-closure}
Every nonempty exclusion zone \(S\subseteq V\) on a tree satisfies
$S=\operatorname{cl}(s)$
for some \(s\in S\).
\end{corollary}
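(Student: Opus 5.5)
The corollary follows by chaining the two preceding theorems, so the plan is to verify that their hypotheses are met and then compose them. Let \(S\subseteq V\) be a nonempty exclusion zone on the tree \(T\).

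First, invoke Theorem~\ref{thm:loss-closed}. It gives \(\operatorname{cl}(S)=S\), and equivalently, whenever \(x\in S\) and \(x\to y\) in \(L(T)\), we have \(y\in S\). This is exactly the hypothesis ``closed under outgoing edges'' required by Theorem~\ref{thm:tournament-generator}.

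Second, check that \(L(T)\) is a tournament. This was observed right after Definition~\ref{def:pairwise-loss}. Under the fixed deterministic tie-breaking, every two-candidate election \(\{x,y\}\) with \(x\neq y\) has a unique winner. Hence exactly one of \(x\to y\) or \(y\to x\) is present, and there are no loops, since the edges are only defined for distinct pairs.

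Third, apply Theorem~\ref{thm:tournament-generator} with \(L=L(T)\) and the nonempty closed set \(S\). It produces some \(s\in S\) with \(\operatorname{cl}(s)=S\), which is the claim.

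There is no genuine obstacle here. The only point requiring care is that the closure operator \(\operatorname{cl}\) in Theorem~\ref{thm:tournament-generator} is reachability in the whole tournament \(L(T)\), not in \(L[S]\). This matters because \(\operatorname{cl}(s)\) is computed in \(L(T)\) and must not escape \(S\). Closedness of \(S\) rules this out: any directed path starting in \(S\) stays in \(S\), by induction along the path using Theorem~\ref{thm:loss-closed}. Consequently, reachability from \(s\) in \(L(T)\) coincides with reachability in \(L[S]\), and \(\operatorname{cl}(s)=S\) holds in the sense defined in the paper.
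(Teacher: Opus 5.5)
Your proposal is correct and matches the paper's proof. You apply Theorem~\ref{thm:loss-closed} to get closure under pairwise loss, note that deterministic tie-breaking makes \(L(T)\) a tournament, and then invoke Theorem~\ref{thm:tournament-generator}; your extra check that reachability from \(s\) cannot leave \(S\) is already part of the paper's proof of that theorem.
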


\begin{proof}
By Theorem~\ref{thm:loss-closed}, every exclusion zone is closed under pairwise loss. 
Since deterministic tie-breaking makes \(L(T)\) a tournament, Theorem~\ref{thm:tournament-generator} applies and gives a vertex \(s\in S\) such that \(\operatorname{cl}(s)=S\).
\end{proof}

\begin{theorem}[Polynomial-time minimum exclusion zone on trees]
\label{thm:min-zone-poly}
The minimum IRV exclusion zone on a tree can be computed in polynomial time. 
Moreover, all nonempty exclusion zones on a tree can be enumerated and computed in polynomial time.
\end{theorem}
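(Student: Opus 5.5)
The proof combines Corollary~\ref{cor:zone-singleton-closure} with the polynomial-time verifier of Corollary~\ref{cor:tree-verification}.

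\emph{Building the pairwise-loss graph.} For each of the \(\binom{n}{2}\) pairs \(\{x,y\}\) we compute \(\operatorname{IRV}(T,\{x,y\})\). We first compute all distances in \(T\) by BFS from every vertex, in \(O(n^2)\) total time. Each voter then picks the candidate with the smaller key \(\kappa\), and we compare the two scores, breaking a tie by eliminating the larger ID. This gives the tournament \(L(T)\) in \(O(n^3)\) time, or \(O(n^2)\) per pair.

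\emph{Forming the candidate zones.} For each \(s\in V\) we compute \(\operatorname{cl}(s)\) by a graph search in \(L(T)\), in \(O(n^2)\) time per vertex. This yields the family \(\mathcal{C}=\{\operatorname{cl}(s): s\in V\}\), which has at most \(n\) distinct sets. By Corollary~\ref{cor:zone-singleton-closure}, every nonempty exclusion zone lies in \(\mathcal{C}\). We do not need the converse: a set in \(\mathcal{C}\) need not be an exclusion zone, since pairwise closure is only a necessary condition.

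\emph{Filtering by the verifier.} For each distinct \(C\in\mathcal{C}\) we run the verifier of Corollary~\ref{cor:tree-verification}. This takes at most \(|C|\le n\) calls to \textsc{Kill}\((T,u,V\setminus C)\), each polynomial by Theorem~\ref{thm:kill}. The sets that pass are exactly the nonempty exclusion zones, so the enumeration is complete and sound, and it uses \(O(n^2)\) \textsc{Kill} calls in total.

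\emph{Minimum zone and correctness.} The family of exclusion zones is nested, as shown in Section~\ref{sec:model}. It is also nonempty, because \(V\) is an exclusion zone. Hence \(S^\star\) is the smallest-cardinality set among those that pass the verifier. For well-definedness we note that \(V\in\mathcal{C}\). The tournament \(L(T)\) has a vertex in its source strongly connected component, and by the argument of Theorem~\ref{thm:tournament-generator} applied to \(S=V\), the closure of that vertex is all of \(V\). So at least one set passes, and the minimum is taken over a nonempty family.

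The total running time is polynomial: \(O(n^2)\) \textsc{Kill} calls, each \(O(n^{13})\), plus lower-order terms.

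The main potential obstacle is not in this assembly but in the cited ingredients. The key one is Corollary~\ref{cor:zone-singleton-closure}, which rests on \(L(T)\) being a genuine tournament. That in turn depends on the deterministic tie-breaking rule giving a unique winner in every two-candidate election, which holds by construction. Given that, the argument is routine bookkeeping: enumerate at most \(n\) closures, verify each one, and return the smallest that passes.
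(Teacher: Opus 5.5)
Your proposal is correct and follows the paper's proof essentially step for step: build $L(T)$ from all two-candidate elections, form the at most $n$ singleton closures $\operatorname{cl}(v)$, filter them with the \textsc{Kill}-based verifier, and return the smallest set that passes. Your remarks that the zones are nested and that $V$ is itself a singleton closure make explicit what the paper leaves implicit; the only slip is that a two-candidate election costs $O(n)$ per pair, not $O(n^2)$, which is what gives your stated $O(n^3)$ total for building $L(T)$.
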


\begin{proof}[Proof sketch]
Build the pairwise-loss tournament \(L(T)\) by evaluating all \(O(n^2)\) two-candidate elections. 
For each vertex \(v\in V\), compute the singleton closure \(S_v=\operatorname{cl}(v)\). 
By Corollary~\ref{cor:zone-singleton-closure}, every nonempty exclusion zone is one of these at most \(n\) sets. 
We then test each \(S_v\) using the \textsc{Kill}-based verifier from Corollary~\ref{cor:tree-verification}, and return a smallest set that passes the test. 
The same enumeration, after removing duplicates, gives all nonempty exclusion zones. 
All steps are polynomial-time.
\end{proof}
\section{Hardness beyond trees and beyond IRV}\label{sec:hardness}
The preceding sections show that exclusion-zone verification and minimum-zone computation are tractable on trees.
It is known that for general graphs, computational problems about exclusion zones are hard~\cite{Tomlinson25exclusion}.
We strengthen the hardness results of ~\cite{Tomlinson25exclusion} on general graphs, exclusion-zone problems are computationally hard, and this hardness extends beyond IRV to a broad class of deterministic elimination rules.

\paragraph{A rule-level invariance: Strong Forced Elimination (SFE).}
We introduce a rule-level property, \emph{Strong Forced Elimination (SFE)}, that captures the invariance used by
hardness reductions for exclusion zones. Informally, a rule satisfies SFE if once a candidate is eliminated,
the remainder of the elimination process depends only on the induced profile over the remaining candidates
(i.e., how voters ranked the eliminated candidate relative to survivors becomes irrelevant). We show that every deterministic rank-based elimination rule satisfies SFE (including IRV), and we use this to lift the known hardness phenomena to this entire rule family.

\paragraph{Why SFE.}
SFE isolates the \emph{rule-level} forced-elimination cascade exploited by the
IRV-specific reduction of \cite{Tomlinson25exclusion}: once this invariance is
abstracted, the underlying RX3C encoding becomes rule-agnostic (details in Appendix \ref{app:hardness}).
\paragraph{Determinism is essential.}
Our lifting applies to deterministic rank-based elimination rules (all satisfy SFE),
but need not extend to randomised variants; e.g., random transfer can violate SFE; see Appendix \ref{app:hardness}.
We use R  to denote any deterministic, rank-based, elimination-based voting rule.
\begin{theorem}[{\sf co-NP}-Completeness of ${\mathcal{R}}$-{\sc Exclusion} under {\sf SFE}]
\label{thm:hardness-decision}

For any deterministic rank-based,
elimination-based voting rule ${\mathcal{R}}$
that satisfies {\sf SFE},
${\mathcal{R}}$-{\sc Exclusion}
is {\sf co-NP}-complete.

\end{theorem}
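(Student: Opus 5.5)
The proof splits into membership in co-NP and co-NP-hardness. For membership, the complement problem (``$S$ is not an $\mathcal{R}$-exclusion zone'') is in NP. A certificate is a candidate set $K\subseteq V$ with $K\cap S\neq\emptyset$. Given $K$, we compute all shortest-path distances by BFS, build every voter's ranking with the fixed ID tie-breaking, and run the deterministic elimination rule $\mathcal{R}$ for at most $|K|-1$ rounds. Each round is polynomial because $\mathcal{R}$ is rank-based and deterministic, under the implicit assumption that one elimination step is polytime computable from the profile. We then accept iff $\mathcal{R}(G,K)\notin S$.

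For hardness, I would adapt the reduction of \cite{Tomlinson25exclusion} from Restricted Exact Cover by 3-Sets (RX3C) to the complement problem. Given an RX3C instance $(X,\mathcal{C})$ with $|X|=3q$ and each element in exactly three sets, build their gadget graph: a hub/target region $S$, one ``set'' vertex per $C\in\mathcal{C}$, element vertices attached to the set vertices containing them, and padding paths or cliques that calibrate vote counts. The intended property is that some $K$ meeting $S$ elects a vertex outside $S$ iff an exact cover exists. The rule-agnostic argument then has two directions.
\begin{itemize}
\item[(i)] Given an exact cover, place candidates at the $q$ cover vertices plus one vertex of $S$. By the calibration, the first-place counts force the $S$-candidate to be uniquely last, or last under the ID tie-break. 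Since $\mathcal{R}$ is elimination-based, it is eliminated first. By SFE, the rest of the process depends only on the profile restricted to the survivors, all outside $S$, so the winner lies outside $S$.
\item[(ii)] Conversely, take any $K$ with $K\cap S\neq\emptyset$ whose winner lies outside $S$. Consider the round at which the last $S$-candidate is eliminated, exactly as in Lemma~\ref{lem:singleton-reduction}. By SFE we may restart from the surviving set $R$ and treat that elimination as a first-round, forced-last event. The gadget's counting then forces the outside candidates in $R$ to cover $X$ without overlap, which yields an exact cover.
\end{itemize}

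The main obstacle is (ii): making the counting argument independent of the specific rule. IRV's argument uses plurality scores, while a general rank-based rule may eliminate by, e.g., Borda-like or anti-plurality-style scores. I would first try to design the gadget so that every elimination is ``forced'', meaning the target candidate is last under every reasonable rank-based criterion because a large padded majority ranks it last or near-last. I would then isolate exactly what SFE contributes: the restart step, making later rounds equivalent to fresh elections on the survivors. The remaining requirement is a monotone forced-elimination property of $\mathcal{R}$ on the gadget. If that fails for arbitrary rules, I would fold it into the definition of SFE as formalized in Appendix~\ref{app:hardness} and argue only from it. Polynomial size of the construction, and the fact that rankings come from graph distances with fixed ID tie-breaking, are routine to check.
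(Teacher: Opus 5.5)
Your co-NP membership argument is fine and is the same as the paper's. The hardness part has a genuine gap: both directions rest on plurality arithmetic. In (i) you infer that the $S$-candidate is eliminated first because it has the fewest first-place votes. In (ii) you extract an exact cover from ``the gadget's counting''. Neither inference holds for an arbitrary deterministic rank-based elimination rule, whose per-round elimination may be Coombs-style (most last places), Baldwin-style (lowest Borda score), or any other deterministic function of the current profile. Your proposed repair fails for the same reason. The scoring function is arbitrary (it could, for instance, eliminate the plurality winner), so no single rule-independent gadget forces a designated candidate out ``under every reasonable rank-based criterion''. The fallback of folding the missing property into \textsf{SFE} changes the hypothesis instead of proving the theorem. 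Moreover, the only job you give \textsf{SFE}---restarting from the survivors---is essentially built into this rule class, since scores are computed on the current restricted profile. It therefore gives you no control over \emph{which} candidate $\mathcal{R}$ removes. In (i) it is not even needed: once the unique $S$-candidate is gone, the winner lies outside $S$ trivially.

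What you are missing is the way the paper actually uses the hypothesis. Formally, \textsf{SFE} is the existence of a polynomial-time algorithm that, for the given rule $\mathcal{R}$, outputs a profile in which a chosen candidate is forcibly eliminated (Lemma~\ref{lem:sfe-forcing}). This $\mathcal{R}$-dependent profile $P_1$ is the only hypothesis that gives control over which candidates $\mathcal{R}$ eliminates. The paper takes the RX3C encoding $\langle P_0,\mathsf{C}_0,w\rangle$ of Lemma~\ref{lem:rx3c-abstract} and glues it to $P_1$ as $P=P_0\uplus P_1$ with $\mathsf{C}_0\cap\mathsf{C}_1=\{w\}$. The \textsf{SFE} certificate then replaces the universal step (``$b$ is eliminated before $c$ for every deletion set'') that~\cite{Tomlinson25exclusion} proves by IRV score comparisons. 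This is also why the paper's $\mathcal{R}$-\textsc{Exclusion} instance carries an explicit profile. An arbitrary \textsf{SFE}-produced profile need not be realisable by shortest-path distances, so insisting that all rankings come from a gadget graph adds a metric-realisability obstacle on top of the rule-dependence one. Be aware that the paper asserts, rather than re-derives, that the encoding in Lemma~\ref{lem:rx3c-abstract} behaves identically for every elimination-based rule. A fully rigorous version of this route still has to justify that claim.
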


\begin{proof}[Proof sketch]
Membership in {\sf co-NP} follows because non-membership has a polynomial certificate. A candidate set intersecting the proposed zone whose winner lies outside it. 
For hardness, we adapt the RX3C construction of~\citet{Tomlinson25exclusion}. 
The construction creates a proposed zone \(Z\) such that exact covers correspond exactly to candidate configurations that allow the winner to escape \(Z\). 
SFE replaces the IRV-specific forced-elimination step. In no-instances, the forced cascade prevents all escape configurations, so \(Z\) remains an exclusion zone. 
\end{proof}

\begin{theorem}[{\sf NP}-Hardness of {\sc Min}-${\mathcal{R}}$-{\sc Exclusion} under {\sf SFE}]
\label{thm:hardness-min}
Fix a deterministic rank-based, elimination-based voting rule that satisfies {\sf SFE}.
Then,
{\sc Min}-${\mathcal{R}}$-{\sc Exclusion} is {\sf NP}-hard.

\end{theorem}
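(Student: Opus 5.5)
The plan is to reduce from the decision problem of Theorem~\ref{thm:hardness-decision}, more precisely from the RX3C construction behind it. Fix the rule \(\mathcal{R}\) satisfying SFE. Given an RX3C instance, build the graph \(G\) and proposed zone \(Z\) exactly as in the proof of Theorem~\ref{thm:hardness-decision}. In a yes-instance some candidate configuration lets the winner escape \(Z\); in a no-instance \(Z\) is an exclusion zone. I would then design the reduction so that the \emph{minimum} exclusion zone reveals which case holds. Concretely, I will show that \(S^\star\subseteq Z\) in no-instances, while in yes-instances \(S^\star\supsetneq Z\), or at least \(|S^\star|>|Z|\). Computing \(S^\star\) then answers RX3C, so Min-\(\mathcal{R}\)-Exclusion is NP-hard under polynomial-time Turing (indeed many-one, via the size threshold) reductions.

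The key steps, in order, are as follows.

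First, by the nestedness argument of Section~\ref{sec:model}, the exclusion zones form a chain. Hence in a no-instance \(Z\) is an exclusion zone and \(S^\star\subseteq Z\), and in a yes-instance \(Z\) is not one. The subtle point is the yes-case: it is not enough that \(Z\) fails to be a zone, since some strict subset of \(Z\) might still be one, giving \(S^\star\subsetneq Z\) in both cases. To exclude this, I would show that every exclusion zone must contain \(Z\). The natural tool is the pairwise-closure argument of Theorem~\ref{thm:loss-closed}, which is valid for any rule, since two-candidate elections under any deterministic elimination rule have unique winners. I would check, or enforce by adding gadget vertices (e.g.\ pendant paths weighting a central hub), that the pairwise-loss tournament restricted to \(Z\) is strongly connected. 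I would also check that every vertex outside \(Z\) has a pairwise loss into \(Z\). Then every nonempty zone contains \(\operatorname{cl}\) of some vertex, which contains \(Z\). Two-candidate outcomes for rank-based elimination rules reduce to majority with ID tie-breaking, so this is a purely metric check on the gadget.

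Second, combining the two cases: in no-instances \(S^\star=Z\); in yes-instances \(Z\) is not a zone but every zone contains \(Z\), so \(S^\star\supsetneq Z\). Deciding whether the computed \(S^\star\) equals \(Z\) (equivalently, whether \(|S^\star|\le |Z|\)) therefore decides RX3C. SFE is used only inside the escape/no-escape analysis inherited from Theorem~\ref{thm:hardness-decision}, where after the forced eliminations the remaining process depends only on the induced profile.

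I expect the main obstacle to be the first step: guaranteeing that no proper subset of \(Z\) is an exclusion zone, uniformly over all rules satisfying SFE. If strong connectivity of the pairwise tournament on \(Z\) does not hold for the original construction, I would first modify the gadget by attaching a symmetric structure, for instance a long path through \(Z\) forming a pairwise cycle. The modification must be done so that it neither creates new escape configurations nor destroys the forced cascade that SFE relies on.
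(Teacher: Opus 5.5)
Your conditional argument is sound, but it rests on two properties you never establish. Suppose (i) the pairwise-loss tournament restricted to \(Z\) is strongly connected, and (ii) every vertex outside \(Z\) loses some two-candidate election to a vertex of \(Z\). Then, by Theorem~\ref{thm:loss-closed} (whose proof uses only two-candidate elections, so it holds for any deterministic rule), every nonempty exclusion zone contains \(Z\). That gives \(S^\star=Z\) in no-instances and \(S^\star\supsetneq Z\) in yes-instances. However, (i) and (ii) carry the entire weight of the reduction, and you flag them as the main obstacle and leave them open. You also cannot import them from the paper. Its proof never reasons about \(S^\star\) or the pairwise tournament; it fixes a budget \(k\) (the number of set-candidates in an exact cover) and argues via Lemma~\ref{lem:rx3c-abstract} about deletion sets \(D\subseteq C\setminus Z\) with \(|D|\le k\).

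There is no reason to expect (i) to hold for the RX3C gadget. If some \(z\in Z\) wins all its two-candidate contests inside \(Z\), then in a no-instance \(z\) has no outgoing loss edge at all, since \(Z\) is a zone and so every outside vertex loses to every vertex of \(Z\). Hence \(\operatorname{cl}(z)=\{z\}\). Whether \(\{z\}\), or any other proper subset of \(Z\), is a zone is then decided by multi-candidate elimination dynamics, which your tournament argument cannot see.

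Your fallback of attaching pendant paths or a cyclic structure is not innocuous. In graph voting each new vertex is both a voter and a potential candidate, so it changes plurality scores and transfers in every configuration. You would have to re-establish both directions of the escape analysis behind Theorem~\ref{thm:hardness-decision}: no new escape in no-instances, and the exact-cover escape still working in yes-instances. This would have to hold uniformly over all rules satisfying SFE, and the proposal does not attempt it.

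Property (ii) has a similar problem. It is automatic only in no-instances and must be argued separately in yes-instances. Your plan to check it metrically via majority with ID tie-breaking is not justified for an arbitrary deterministic elimination rule. Such a rule's two-candidate outcome is dictated by its own scoring function and tie-breaking, not necessarily by majority.

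What is missing is an actual proof of two things. First, that no proper subset of \(Z\) is an exclusion zone in no-instances. Second, that every exclusion zone has more than \(|Z|\) vertices in yes-instances.
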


\begin{proof}[Proof sketch]
The optimization reduction uses the same RX3C gadget but encodes the distinction in the size of the smallest valid zone. 
If an exact cover exists, the construction admits a smaller exclusion zone; if no exact cover exists, every exclusion zone must include additional protected candidates. 
Thus a polynomial-time algorithm for computing a minimum exclusion zone would decide RX3C. 
\end{proof}

Full formal definitions and proofs for Theorems~\ref{thm:hardness-decision}-\ref{thm:hardness-min} are deferred to Appendix \ref{app:hardness}.\nocite{gonzalez1985clustering}

\section{Discussion, Limitations and Future Work}
\label{sec:discussion}

We studied exclusion zones for deterministic IRV under graph-induced metric preferences. 
In this model, voters and possible candidate locations are vertices of an unweighted graph, and voters rank candidates by shortest-path distance. 
Our main result identifies trees as a tractable frontier. Although exclusion-zone verification and minimum-zone computation are hard on general graphs, both problems can be solved in polynomial time on trees.

The second ingredient is the pairwise-loss closure argument. 
Under deterministic tie-breaking, every two-candidate election has a unique winner, so the pairwise-loss graph is a tournament. 
We show that every exclusion zone is closed under pairwise loss, and that every nonempty closed set in a tournament is generated by a single vertex. 
This reduces minimum-zone computation from a search over exponentially many subsets to a search over at most \(n\) singleton-generated closures.
The pairwise-loss viewpoint may also be useful beyond trees. The closure argument itself does not depend on the tree dynamic program, as any exclusion zone must be consistent with the outcomes of two-candidate elections. Thus, the pairwise-loss graph gives a rule-independent way to generate a small family of plausible candidate zones, while the remaining difficulty is to verify which of those candidates are true exclusion zones for the voting rule and graph class under consideration. 
In this sense, our tree result can be viewed as combining a general structural reduction with a tree-specific polynomial-time verifier.

\paragraph{Limitations.}
Our polynomial-time algorithm is primarily a tractability result. 
The stated worst-case running time is conservative and high, and we do not claim that the direct implementation is practical for large trees without further optimisation. 
A more efficient implementation may be possible by compressing DP states, pruning unreachable summaries, or exploiting additional structure in special tree families.
Our positive results rely strongly on the tree structure. 
In particular, the boundary-collapse and two-recipient arguments use the fact that paths in a tree are unique. 
Graphs with cycles introduce multiple routes between subgraphs, and a subtree may no longer interact with the rest of the graph through a single boundary representative. 
For this reason, the present dynamic program does not immediately extend to planar graphs, bounded-treewidth graphs, or general sparse graph classes.
Finally, our hardness extension beyond IRV is stated for deterministic rank-based elimination rules satisfying Strong Forced Elimination. 
This identifies a broad sufficient condition under which the known general-graph hardness phenomena persist, but it does not fully classify which voting rules admit tractable exclusion-zone computation.

\paragraph{Future work.}
A natural next step is to identify larger tractable graph classes. 
Outerplanar graphs, bounded-treewidth graphs, and restricted planar graphs are particularly interesting candidates because they retain some separator structure while allowing cycles. 
Understanding whether the \textsc{Kill} dynamic program can be generalised to such classes, or whether new hardness barriers appear, is an important open direction.

It would also be valuable to study exclusion zones for other voting rules. 
Our hardness results suggest that deterministic elimination rules with forced-elimination cascades remain hard on general graphs, but other rules may behave differently. 
A broader classification of voting rules by their exclusion-zone structure could clarify which aggregation mechanisms provide robust moderation guarantees in networked preference spaces.

More broadly, exclusion zones provide a way to reason about stability in collective decision systems. 
They identify regions of a graph that, once represented by at least one candidate, prevent outside regions from winning. 
Understanding when such certificates can be computed exactly, approximated efficiently, or related to empirical network structure remains a promising direction for algorithmic social choice and networked decision-making.

\section*{Acknowledgments} We would like to thank Kiran Tomlinson for the fruitful discussions and his valuable input during the preparation of this work.

\bibliographystyle{plainnat}
\bibliography{bibliography}


\appendix
\numberwithin{definition}{section}
\numberwithin{lemma}{section}
\numberwithin{proposition}{section}
\numberwithin{theorem}{section}
\numberwithin{corollary}{section}

\section*{Appendix}
\section{Omitted Proofs for Tree Tractability}
\label{app:tree-proofs}

This appendix contains the proofs and implementation details omitted from the main text. 
Throughout, \(T=(V,E)\) is a tree rooted at the designated candidate \(u\), and \(A\subseteq V\setminus\{u\}\) is the allowed opponent region. 
For a vertex \(x\neq u\), \(T_x\) denotes the rooted subtree of \(x\), and \(p(x)\) denotes the parent of \(x\). 
Recall that voters rank candidates by
\[
\kappa_x(c)=\bigl(d(x,c),\operatorname{id}(c)\bigr),
\]
with smaller keys preferred, and elimination ties are broken against the largest ID.

\subsection{Antichain Normal Form}
\label{app:proofs-antichain}

\begin{restated}{Lemma}{lem:antichain}{Antichain normal form}
If \(\textsc{Kill}(T,u,A)\) is true, then there exists a witness of the form
$K=\{u\}\cup F,$
where \(F\subseteq A\) is an antichain in the rooted tree, such that \(u\) is eliminated in the first IRV round.
\end{restated}

\begin{proof}
By the round-1 reduction in the main text, there exists a witness set
\[
K_0=\{u\}\cup F_0,
\qquad F_0\subseteq A,
\]
such that \(u\) is eliminated in round \(1\) of the election on \(K_0\).

Root the tree at \(u\). 
We iteratively remove descendant opponents. 
While there exist distinct \(a,b\in F_0\) such that \(a\) is an ancestor of \(b\), delete the descendant \(b\). 
Let \(F\) be the final set. 
Then \(F\subseteq A\) and \(F\) is an antichain. 
It remains to show that each deletion preserves the fact that \(u\) is eliminated in round \(1\).

Consider one deletion step. 
Let \(a,b\in F_0\), where \(a\) is an ancestor of \(b\), and let
\[
F_0'=F_0\setminus\{b\},
\qquad
K=\{u\}\cup F_0,
\qquad
K'=\{u\}\cup F_0'.
\]
Only voters who voted for \(b\) under \(K\) can change their vote when \(b\) is removed. 
We claim that none of these voters switches to \(u\).

Let \(v\) be a voter who votes for \(b\) under \(K\). 
Then \(v\in T_a\). 
Indeed, if \(v\notin T_a\), then the unique path from \(v\) to \(b\) passes through \(u\), so
\[
d(v,b)=d(v,u)+d(u,b)>d(v,u),
\]
which means \(v\) would prefer \(u\) to \(b\), a contradiction.

Since \(v\in T_a\), the path from \(v\) to \(u\) passes through \(a\). 
Therefore
\[
d(v,u)=d(v,a)+d(a,u)>d(v,a).
\]
Thus, after \(b\) is deleted, voter \(v\) still prefers \(a\) to \(u\). 
So no voter who previously supported \(b\) switches to \(u\). 
Hence
\[
\operatorname{score}_{K'}(u)\leq \operatorname{score}_{K}(u).
\]

Moreover, deleting \(b\) cannot decrease the score of any remaining opponent: votes previously assigned to \(b\) are reassigned to remaining candidates. 
Since \(u\) was eliminated in round \(1\) under \(K\), every remaining opponent had score at least \(\operatorname{score}_{K}(u)\), and any opponent tied with \(u\) for last place had ID at most \(\operatorname{id}(u)\). 
After deleting \(b\), \(u\)'s score does not increase and no remaining opponent's score decreases. 
Therefore \(u\) remains a last-place candidate, and the deterministic tie-breaking rule still eliminates \(u\) if there is a tie.

Repeating this deletion step until no ancestor--descendant pair remains gives an antichain \(F\subseteq A\) such that \(u\) is eliminated in round \(1\) under \(\{u\}\cup F\).
\end{proof}

\subsection{Boundary Collapse}
\label{app:proofs-boundary}

\begin{restated}{Lemma}{lem:boundary}{Boundary collapse}
Let \(T_x\) be a rooted subtree, and let \(c_1,c_2\notin T_x\). 
Then for every voter \(v\in T_x\),
\[
\kappa_v(c_1)\leq \kappa_v(c_2)
\quad\Longleftrightarrow\quad
\kappa_x(c_1)\leq \kappa_x(c_2).
\]
In particular, all voters in \(T_x\) agree on the best outside candidate.
\end{restated}

\begin{proof}
For any \(v\in T_x\) and any \(c\notin T_x\), the unique path from \(v\) to \(c\) leaves \(T_x\) through the root \(x\). 
Thus
\[
d(v,c)=d(v,x)+d(x,c).
\]
Therefore, for any two outside candidates \(c_1,c_2\notin T_x\),
\[
d(v,c_1)-d(v,c_2)
=
d(x,c_1)-d(x,c_2).
\]
Hence \(v\) orders outside candidates by distance in exactly the same way as \(x\). 
If the distances are tied, both voters use the same deterministic ID tie-breaking rule. 
Thus the ordering by \(\kappa_v\) is identical to the ordering by \(\kappa_x\) on outside candidates.
\end{proof}

\subsection{The Two-Recipient Lemma}
\label{app:proofs-two}

\begin{restated}{Lemma}{lem:two-recipient}{Two-recipient lemma}
Consider a node \(x\) with child subtrees \(T_{y_1},\dots,T_{y_d}\), and suppose no candidate is placed at \(x\). 
Let \(F\subseteq \bigcup_i T_{y_i}\) be the set of internal candidates placed in the child subtrees, and let \(e\notin T_x\) be the outside representative. 
Let \(r_1\) be the best candidate in \(F\) from the viewpoint of \(x\), and let \(T_{y^\star}\) be the child subtree containing \(r_1\). 
Let \(r_2\) be the best candidate in \(F\setminus T_{y^\star}\), or \(\bot\) if this set is empty. 
Then votes leaving any child subtree can be transferred only to \(e\) and to one of \(r_1,r_2\).
\end{restated}
\begin{proof}
Fix a child \(y\) of \(x\). 
By the boundary-collapse lemma, all voters in \(T_y\) agree on the best candidate outside \(T_y\).

First consider the outside candidates not in \(T_x\). 
For every \(c\notin T_x\), the path from \(y\) to \(c\) passes through \(x\), so
\[
d(y,c)=d(y,x)+d(x,c).
\]
Thus the best candidate outside \(T_x\) from the viewpoint of \(y\) is the same as the best candidate outside \(T_x\) from the viewpoint of \(x\), namely \(e\).

Now consider internal candidates outside \(T_y\). 
If \(y\neq y^\star\), then \(r_1\notin T_y\). 
For every candidate \(c\in F\setminus T_y\), the path from \(y\) to \(c\) passes through \(x\), so \(y\) orders such candidates in the same way as \(x\). 
Since \(r_1\) is the best internal candidate from the viewpoint of \(x\), it is also the best internal candidate outside \(T_y\) from the viewpoint of \(y\).

If \(y=y^\star\), then \(r_1\in T_y\) and cannot be an outside candidate for \(T_y\). 
The best internal candidate outside \(T_y\), if one exists, is then \(r_2\), by definition. 
If no such candidate exists, there is no internal cross-subtree recipient.

Therefore, the best candidate outside \(T_y\) is the best among \(e\) and either \(r_1\) or \(r_2\), depending on whether \(y\neq y^\star\) or \(y=y^\star\). 
Thus votes leaving \(T_y\) can only go to \(e\) and to one of \(r_1,r_2\).
\end{proof}

\section{Full Recurrence for the \textsc{Kill} Dynamic Program}
\label{app:dp-recurrence}

This section gives the explicit recurrence used to decide \(\textsc{Kill}(T,u,A)\) on trees. 
The recurrence is written for correctness and clarity rather than optimized implementation.

\subsection{State invariant}

For every non-root vertex \(x\neq u\) and every outside representative \(e\notin T_x\), the table \(DP[x,e]\) stores all feasible tuples
\[
(r_1,v_1,\ r_2,v_2,\ m_{\mathrm{rest}},M_{\mathrm{rest}},\ a).
\]
A tuple is feasible if there exists an antichain \(F_x\subseteq A\cap T_x\) such that the tuple summarizes the first-round plurality election on voters in \(T_x\) with candidate set
\[
F_x\cup\{e\}.
\]

The components have the following meaning.

\begin{itemize}
    \item \(r_1\) is the best internal candidate in \(F_x\) from the viewpoint of \(x\), or \(\bot\) if \(F_x=\emptyset\).
    \item \(r_2\) is the best internal candidate not contained in the child subtree of \(x\) that contains \(r_1\), or \(\bot\) if no such candidate exists.
    \item \(v_1\) and \(v_2\) are the numbers of voters in \(T_x\) who vote for \(r_1\) and \(r_2\), respectively. If \(r_i=\bot\), then \(v_i=0\).
    \item \(a\) is the number of voters in \(T_x\) who vote for the outside representative \(e\).
    \item \(m_{\mathrm{rest}}\) is the minimum vote total among internal candidates in \(F_x\setminus\{r_1,r_2\}\). If this set is empty, then \(m_{\mathrm{rest}}=\infty\).
    \item \(M_{\mathrm{rest}}\) is the largest ID among candidates in \(F_x\setminus\{r_1,r_2\}\) attaining the minimum \(m_{\mathrm{rest}}\). If the set is empty, then \(M_{\mathrm{rest}}=-\infty\).
\end{itemize}

The pair \((m_{\mathrm{rest}},M_{\mathrm{rest}})\) is needed because the IRV elimination rule breaks last-place ties by eliminating the largest-ID candidate.

\subsection{Leaf base case}

If \(x\) is a leaf, then \(T_x=\{x\}\).

\begin{itemize}
    \item The empty placement \(F_x=\emptyset\) is always feasible. 
    The unique voter \(x\) votes for the outside representative \(e\), so
    \[
    DP[x,e]\ni
    (\bot,0,\ \bot,0,\ \infty,-\infty,\ 1).
    \]

    \item If \(x\in A\), we may also place an opponent at \(x\). 
    Then \(F_x=\{x\}\), and the voter at \(x\) votes for \(x\), so
    \[
    DP[x,e]\ni
    (x,1,\ \bot,0,\ \infty,-\infty,\ 0).
    \]
\end{itemize}

\subsection{Internal node recurrence}

Let \(x\neq u\) be an internal node with children \(y_1,\dots,y_d\).

There are two cases.

\subsubsection*{Case 1: place an opponent at \(x\)}

This case is allowed only if \(x\in A\). 
If an opponent is placed at \(x\), then the antichain constraint forbids placing any opponent in a descendant subtree. 
Thus \(F_x=\{x\}\).

Every voter \(v\in T_x\) prefers \(x\) to the outside representative \(e\), because the path from \(v\) to \(e\) passes through \(x\), and hence
\[
d(v,e)=d(v,x)+d(x,e)>d(v,x).
\]
Therefore all \(|T_x|\) voters in \(T_x\) vote for \(x\), and we insert
\[
DP[x,e]\ni
(x,|T_x|,\ \bot,0,\ \infty,-\infty,\ 0).
\]

\subsubsection*{Case 2: no candidate at \(x\)}

In this case \(F_x\) is the union of antichain placements chosen inside the child subtrees. 
We combine child summaries using a feasibility DP.

Fix \(e\notin T_x\). 
We enumerate a candidate pair \((r_1,r_2)\in (V\cup\{\bot\})^2\), intended to be the global boundary-best internal candidates at \(x\). 
Only compatible child tuples are used.

\paragraph{Compatibility of the guessed representatives.}

If \(r_1=\bot\), then we require \(r_2=\bot\), and every child must use a tuple with no internal candidates, i.e. \(r_1^y=r_2^y=\bot\).

Now suppose \(r_1\neq\bot\). 
Let \(y^\star\) be the unique child of \(x\) such that \(r_1\in T_{y^\star}\). 
A selected tuple for \(T_{y^\star}\) must satisfy
\[
r_1^{y^\star}=r_1.
\]
For every child \(y\neq y^\star\) with \(r_1^y\neq\bot\), we require
\[
\kappa_x(r_1)\leq \kappa_x(r_1^y).
\]
This ensures that \(r_1\) is truly the best internal candidate from the viewpoint of \(x\).

If \(r_2=\bot\), then every child \(y\neq y^\star\) must have no internal candidate. 
If \(r_2\neq\bot\), let \(y^{\star\star}\) be the unique child containing \(r_2\). 
We require \(y^{\star\star}\neq y^\star\) and
\[
r_1^{y^{\star\star}}=r_2.
\]
For every child \(y\notin\{y^\star,y^{\star\star}\}\) with \(r_1^y\neq\bot\), we require
\[
\kappa_x(r_2)\leq \kappa_x(r_1^y).
\]
This ensures that \(r_2\) is truly the best internal candidate outside the child subtree containing \(r_1\).

\paragraph{Child outside representatives.}

For each child \(y\), define the outside representative \(e_y\notin T_y\) used to query \(DP[y,e_y]\).

If \(r_1=\bot\), then no internal candidate exists outside any child subtree, so set
\[
e_y=e
\qquad\text{for all children }y.
\]

If \(r_1\neq\bot\), let \(y^\star\) be the child containing \(r_1\). 
For \(y\neq y^\star\), votes leaving \(T_y\) can go only to \(e\) or \(r_1\), so set
\[
e_y=\arg\min_{c\in\{e,r_1\}} \kappa_y(c).
\]
For \(y=y^\star\), votes leaving \(T_y\) can go only to \(e\) or \(r_2\), so set
\[
e_y=\arg\min_{c\in\{e,r_2\}\setminus\{\bot\}} \kappa_y(c).
\]

\paragraph{Child contribution summaries.}

For each child \(y\), and each compatible tuple
\[
\tau_y=
(r_1^y,v_1^y,\ r_2^y,v_2^y,\ m_{\mathrm{rest}}^y,M_{\mathrm{rest}}^y,\ a^y)
\in DP[y,e_y],
\]
define a contribution summary
\[
(\Delta_1,\Delta_2,\Delta_e,\ m_y,M_y).
\]

The values \(\Delta_1,\Delta_2,\Delta_e\) are the votes from \(T_y\) assigned to the global recipients \(r_1,r_2,e\), respectively:
\[
\Delta_e=
\begin{cases}
a^y, & e_y=e,\\
0, & \text{otherwise,}
\end{cases}
\]
\[
\Delta_1=
\begin{cases}
a^y, & e_y=r_1,\\
v_1^y, & r_1^y=r_1,\\
v_2^y, & r_2^y=r_1,\\
0, & \text{otherwise,}
\end{cases}
\]
and
\[
\Delta_2=
\begin{cases}
a^y, & e_y=r_2,\\
v_1^y, & r_1^y=r_2,\\
v_2^y, & r_2^y=r_2,\\
0, & \text{otherwise.}
\end{cases}
\]

The pair \((m_y,M_y)\) summarizes all internal candidates contributed by \(T_y\) except the global tracked candidates \(r_1,r_2\). 
It is obtained by taking the minimum among the following candidate-score pairs:
\[
(v_1^y,\operatorname{id}(r_1^y))
\quad\text{if } r_1^y\notin\{\bot,r_1,r_2\},
\]
\[
(v_2^y,\operatorname{id}(r_2^y))
\quad\text{if } r_2^y\notin\{\bot,r_1,r_2\},
\]
and
\[
(m_{\mathrm{rest}}^y,M_{\mathrm{rest}}^y)
\quad\text{if } m_{\mathrm{rest}}^y<\infty.
\]
If no such candidate exists, set
\[
(m_y,M_y)=(\infty,-\infty).
\]
When several candidates attain the same minimum score, \(M_y\) is the largest ID among them.

\paragraph{Inner feasibility DP over children.}

Define an inner boolean table
\[
G[i,V_1,V_2,V_e,m,M]\in\{\textsf{false},\textsf{true}\}.
\]
The intended meaning is that after processing the first \(i\) children, it is feasible to obtain:
\begin{itemize}
    \item \(V_1\) votes for \(r_1\),
    \item \(V_2\) votes for \(r_2\),
    \item \(V_e\) votes for the global outside representative \(e\),
    \item minimum score \(m\) among all other internal candidates,
    \item largest ID \(M\) among candidates attaining that minimum.
\end{itemize}

Initialize
\[
G[0,0,0,0,\infty,-\infty]=\textsf{true}.
\]
For child \(y_i\), for every reachable state and every compatible child contribution
\[
(\Delta_1,\Delta_2,\Delta_e,\ m_i,M_i),
\]
set
\[
G[i,V_1+\Delta_1,V_2+\Delta_2,V_e+\Delta_e,m',M']
=\textsf{true},
\]
where \((m',M')\) is the minimum-score combination of \((m,M)\) and \((m_i,M_i)\):
\[
(m',M')=
\begin{cases}
(m,M), & m<m_i,\\
(m_i,M_i), & m_i<m,\\
(m,\max\{M,M_i\}), & m=m_i.
\end{cases}
\]

\paragraph{Adding the boundary voter \(x\).}

After all children have been processed, the voter at \(x\) votes for the best candidate among the outside representative \(e\) and the best internal candidate \(r_1\):
\[
c^\star=\arg\min_{c\in\{e,r_1\}\setminus\{\bot\}}\kappa_x(c).
\]
If \(c^\star=e\), increment \(V_e\) by \(1\). 
If \(c^\star=r_1\), increment \(V_1\) by \(1\).

For every final feasible state, insert
\[
(r_1,V_1,\ r_2,V_2,\ m,M,\ V_e)
\]
into \(DP[x,e]\).

\subsection{Root aggregation and decision}

At the root \(u\), the candidate \(u\) is always present. 
The opponents form an antichain contained in \(A\), distributed among the child subtrees of \(u\).

We use the same merge logic as above, with the global outside representative fixed to \(u\). 
The root itself contributes one vote to \(u\). 
After aggregating child summaries, let:
\[
v_u = 1+V_u
\]
be the first-round score of \(u\), where \(V_u\) is the total number of descendant voters assigned to the outside representative \(u\).

From the same aggregation, compute the minimum opponent score \(m_{\mathrm{opp}}\) and the largest ID \(M_{\mathrm{opp}}\) among opponents attaining that score, by combining the tracked candidates \(r_1,r_2\) and the stored rest summary:
\[
(m_{\mathrm{opp}},M_{\mathrm{opp}})
=
\min_{\mathrm{tie}}
\bigl\{
(V_1,\operatorname{id}(r_1)),
(V_2,\operatorname{id}(r_2)),
(m,M)
\bigr\},
\]
ignoring entries whose candidate is \(\bot\) or whose score is \(\infty\). 
Here \(\min_{\mathrm{tie}}\) means minimum by score, with largest ID retained among equal-score minimizers.

We accept only aggregations with at least one opponent, equivalently
\[
m_{\mathrm{opp}}<\infty.
\]
The root candidate \(u\) is eliminated in round \(1\) if and only if
\[
v_u<m_{\mathrm{opp}}
\quad\text{or}\quad
\bigl(v_u=m_{\mathrm{opp}}
\text{ and }
\operatorname{id}(u)>M_{\mathrm{opp}}\bigr).
\]
Thus \(\textsc{Kill}(T,u,A)=\textsc{TRUE}\) if and only if some feasible root aggregation satisfies this condition.

\section{Runtime Analysis for \textsc{Kill}}
\label{app:kill-runtime}

\begin{restated}{Theorem}{thm:kill}{Polynomial-time \textsc{Kill} on trees}
For every tree \(T=(V,E)\), vertex \(u\in V\), and allowed region \(A\subseteq V\setminus\{u\}\), \(\textsc{Kill}(T,u,A)\) can be decided in polynomial time. 
A conservative implementation runs in \(O(n^{13})\) time and \(O(n^{10})\) space.
\end{restated}

\begin{proof}
We first preprocess all-pairs distances in \(T\), for example by running BFS from every vertex, in \(O(n^2)\) time. 
We also compute Euler-tour intervals for the rooted tree, so that membership tests of the form \(c\in T_x\) can be answered in \(O(1)\).

For each pair \((x,e)\), where \(x\neq u\) and \(e\notin T_x\), the table \(DP[x,e]\) stores tuples
\[
(r_1,v_1,\ r_2,v_2,\ m_{\mathrm{rest}},M_{\mathrm{rest}},\ a).
\]
There are \(O(n)\) choices for \(x\) and \(O(n)\) choices for \(e\). 
The candidates \(r_1,r_2\) each range over \(V\cup\{\bot\}\), giving \(O(n^2)\) choices. 
The values \(v_1,v_2,a,m_{\mathrm{rest}}\) range over \(O(n)\) possible values, and \(M_{\mathrm{rest}}\) ranges over \(O(n)\) possible IDs plus \(-\infty\). 
Thus the total number of possible indexed tuples is at most
\[
O(n)\cdot O(n)\cdot O(n^2)\cdot O(n^3)\cdot O(n^2)
=
O(n^9).
\]
A straightforward implementation stores all feasible tuples explicitly, giving \(O(n^9)\) space for the outer tables. 
The inner merge table has size at most \(O(n^6)\) for a fixed node and pair of representatives, and can be reused. 
We state the conservative space bound \(O(n^{10})\).

For the running time, the bottleneck is the merge at an internal node \(x\). 
Fix \((x,e)\) and a guessed pair \((r_1,r_2)\). 
The inner table
\[
G[i,V_1,V_2,V_e,m,M]
\]
has \(O(d_x n^5)\) entries, where \(d_x\) is the number of children of \(x\). 
For each child, the number of distinct contribution summaries is at most \(O(n^5)\): three vote contributions and a minimum-score pair. 
Thus a direct update costs at most \(O(n^{10})\) per layer in the most conservative accounting. 
Enumerating \(O(n^2)\) choices of \((r_1,r_2)\) and summing over all valid \((x,e)\) gives a polynomial bound.

Using the standard charging \(\sum_x d_x=n-1\) across the tree, and iterating only over reachable states and reachable child summaries, one obtains the conservative bound \(O(n^{13})\). 
The precise exponent is not central to the result; the important point is that all state spaces and transition spaces are polynomial in \(n\). 
Therefore \(\textsc{Kill}(T,u,A)\) is decidable in polynomial time.
\end{proof}

\section{Omitted Proofs for Minimum-Zone Computation}
\label{app:min-zone-proofs}

\subsection{Closed Sets in Tournaments}
\label{app:tournament-generator}

\begin{restated}{Theorem}{thm:tournament-generator}{Closed sets in tournaments are singly generated}
Let \(L\) be a tournament, and let \(S\neq\emptyset\) be a subset of vertices such that whenever \(x\in S\) and \(x\to y\), we also have \(y\in S\). 
Then there exists \(s\in S\) such that
$\operatorname{cl}(s)=S.$
\end{restated}

\begin{proof}
Consider the subtournament \(L[S]\). 
Partition \(L[S]\) into strongly connected components, and contract each component to a single node. 
The resulting condensation graph is acyclic.

We claim that this condensation graph is itself a tournament. 
Let \(C_1\) and \(C_2\) be two distinct strongly connected components of \(L[S]\). 
Since \(L[S]\) is a tournament, between every \(a\in C_1\) and \(b\in C_2\), exactly one of \(a\to b\) or \(b\to a\) holds. 
If there were edges in both directions between \(C_1\) and \(C_2\), then there would exist
\[
a_1\to b_1
\quad\text{and}\quad
b_2\to a_2
\]
with \(a_1,a_2\in C_1\) and \(b_1,b_2\in C_2\). 
Since \(C_1\) and \(C_2\) are strongly connected, there are directed paths
\[
a_2\leadsto a_1
\quad\text{and}\quad
b_1\leadsto b_2.
\]
Thus
\[
a_2\leadsto a_1\to b_1\leadsto b_2\to a_2
\]
is a directed cycle connecting the two components, contradicting that they are distinct strongly connected components. 
Therefore all edges between \(C_1\) and \(C_2\) point in one direction, and the condensation graph is a tournament.

An acyclic tournament is transitive, and therefore has a unique source component. 
Let \(C^\star\) be this source component, and choose \(s\in C^\star\). 
Since \(C^\star\) is strongly connected, \(s\) reaches every vertex in \(C^\star\). 
Since \(C^\star\) is the source of the transitive condensation tournament, it reaches every other strongly connected component of \(L[S]\). 
Therefore \(s\) reaches every vertex of \(S\).

Finally, since \(S\) is closed under outgoing edges, no directed path starting in \(S\) can leave \(S\). 
Thus no vertex outside \(S\) is reachable from \(s\). 
Hence \(\operatorname{cl}(s)=S\).
\end{proof}

\subsection{Polynomial-Time Minimum-Zone Computation}

\begin{restated}{Theorem}{thm:min-zone-poly}{Polynomial-time minimum exclusion zone on trees}
The minimum IRV exclusion zone on a tree can be computed in polynomial time. 
Moreover, all nonempty exclusion zones on a tree can be enumerated in polynomial time.
\end{restated}

\begin{proof}
Construct the pairwise-loss tournament \(L(T)\) by evaluating every two-candidate election \(\{x,y\}\). 
There are \(O(n^2)\) such elections, and each can be evaluated in polynomial time.

For each vertex \(v\in V\), compute the reachability closure
\[
S_v=\operatorname{cl}(v)
\]
in \(L(T)\). 
There are at most \(n\) such closures.

By the main-text closure theorem, every exclusion zone is closed under pairwise loss. 
By the tournament-generation theorem above, every nonempty closed set in the pairwise-loss tournament is generated by one vertex. 
Therefore every nonempty exclusion zone is equal to \(S_v\) for some \(v\in V\).

We then test each \(S_v\) using the \(\textsc{Kill}\)-based verifier from the main text:
\[
S_v\text{ is an exclusion zone}
\quad\Longleftrightarrow\quad
\forall u\in S_v,\ 
\textsc{Kill}(T,u,V\setminus S_v)=\textsc{FALSE}.
\]
Each \(\textsc{Kill}\) query is polynomial-time by the tree DP. 
Hence each candidate closure can be tested in polynomial time, and only \(n\) closures are tested.

Returning a minimum-cardinality verified closure gives the minimum exclusion zone. 
Removing duplicate closures and retaining all verified closures enumerates all nonempty exclusion zones.
\end{proof}

\section{Hardness beyond trees and beyond IRV (SFE)}\label{app:hardness}

\subsection{Strong Forced Elimination ({\sf SFE})}

\begin{definition}[Profile perturbation]
\label{profile perturbation}
Consider a preference profile over $P$
over ${\sf C}$
and a subset ${\sf X} \subseteq {\sf C}$ of candidates.
A profile $P'$ is an \emph{\textbf{${\sf X}$-perturbation}} of $P$ if, 
for every voter,
the restriction of her ranking 
to ${\sf C} \setminus {\sf X}$ is identical in $P$ and $P'$.
\end{definition}

\begin{definition}[Forced elimination]
\label{forced elimination}
Consider an elimination-based rank-based voting rule ${\mathcal{R}}$
and a preference profile $P$.
A candidate $c \in {\sf C}$ 
\emph{\textbf{forces the elimination}} 
of a nonempty set
${\mathcal{S}} \subseteq 
{\sf C} \setminus \{ c \}$ of candidates in $P$ 
if the following condition holds:

\begin{quote}
For every $({\mathcal{S}} \cup \{ c \})$-perturbation $P'$ of $P$, 
once $c$ is eliminated in the
execution of ${\mathcal{R}}(P')$, 
every candidate in $\mathcal{S}$ 
is eliminated in subsequent rounds
of ${\mathcal{R}}(P')$.

\end{quote}
In such case,
we say that $c$ is \emph{\textbf{forcibly eliminated}} 
in the profile $P$
under ${\mathcal{R}}$.

\end{definition}

\begin{definition}[Strong Forced Elimination ({\sf SFE}), Algorithmic Version]
\label{strong forced elimination}
An elimination-based voting rule ${\mathcal{R}}$ 
satisfies \emph{\textbf{Strong Forced Elimination}}
if there exists a polynomial-time algorithm
which, given a candidate set ${\sf C}$
and a candidate $c \in {\sf C}$,
outputs a preference profile 
$P({\sf C}, c)$ 
in which $c$ is forcibly eliminated 
under ${\mathcal{R}}$.
\end{definition}

We emphasize that {\sf SFE} is assumed in an
algorithmic sense: 
the profiles witnessing {\sf FE}
are required to be constructible 
in time polynomial in the number of
candidates. 
This assumption is standard in 
complexity-theoretic reductions 
and is satisfied by {\sf IRV} and other natural 
elimination-based voting rules. From Definition~\ref{strong forced elimination},
it immediately follows:

\begin{lemma}[Forced Elimination under {\sf SFE}]
\label{lem:sfe-forcing}

Consider an {\sf SFE} voting rule. 
Then,
for any candidate set ${\sf C}$ and a candidate $c$,
there exists a polynomial-time constructible
a profile $P$ over ${\sf C}$ witnessing 
forced elimination of $c$ 
is constructible in time polynomial in
$|{\sf C}|$.

\end{lemma}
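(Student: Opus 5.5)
This lemma is essentially an unpacking of Definition~\ref{strong forced elimination}, so the proof will be a direct instantiation rather than a construction. Fix an {\sf SFE} rule $\mathcal{R}$, a candidate set ${\sf C}$ and a candidate $c\in{\sf C}$. By Definition~\ref{strong forced elimination}, there is an algorithm $\mathcal{A}$ that, on input $({\sf C},c)$, outputs a profile $P({\sf C},c)$ in which $c$ is forcibly eliminated under $\mathcal{R}$. The running time of $\mathcal{A}$ is bounded by a polynomial in the input size. I will take $P:=P({\sf C},c)$ as the required witness.

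The steps, in order, are as follows.

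First, I will note that the input $({\sf C},c)$ has size polynomial in $|{\sf C}|$ (candidate identifiers can be written in $O(\log|{\sf C}|)$ bits each). Hence a running time polynomial in the input length is also polynomial in $|{\sf C}|$.

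Second, I will observe that the output length is bounded by the running time, so $P$ is a profile of polynomial size. In particular, it has polynomially many voters, each with a ranking of ${\sf C}$.

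Third, I will verify that $P$ meets Definition~\ref{forced elimination}. By construction, there is a nonempty $\mathcal{S}\subseteq{\sf C}\setminus\{c\}$ such that, for every $(\mathcal{S}\cup\{c\})$-perturbation $P'$ of $P$, eliminating $c$ in $\mathcal{R}(P')$ is followed by the elimination of all of $\mathcal{S}$. This is exactly what it means for $P$ to witness forced elimination of $c$.

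The only point that needs care, and it is minor, is the first step: checking that "polynomial-time in the input" in Definition~\ref{strong forced elimination} really means polynomial in $|{\sf C}|$. This requires that the input encoding is not, say, unary in some external parameter. Since the input consists only of ${\sf C}$ and $c$, no such parameter is present. Otherwise the lemma follows immediately from the definition, with existence and constructibility both coming from $\mathcal{A}$.
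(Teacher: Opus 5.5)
Your proposal is correct and follows the same route as the paper. The paper gives no separate proof and only remarks that the lemma follows immediately from Definition~\ref{strong forced elimination}; your argument (take $P := P({\sf C},c)$ from the SFE algorithm and note that polynomial time in the input length is polynomial in $|{\sf C}|$) makes that observation explicit.
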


\begin{lemma}[{\sf SFE} Lemma]
\label{lem:sfe-forced}

Consider an elimination-based voting rule ${\mathcal{R}}$
satisfying {\sf SFE}.
Then, there exist
a preference profile $P$ over ${\sf C}$,
a candidate $c \in {\sf C}$,
and a nonempty set 
${\mathcal{S}} \subseteq {\sf C} \setminus \{ c \}$
of candidates
such that the following condition holds:
\begin{quote}
For every $(S \cup \{ c \})$-perturbation $P'$ of $P$, 
$c$ is eliminated at some round
of the execution ${\mathcal{R}}(P')$, 
and after the elimination of $c$, 
every candidate
in ${\mathcal{S}}$ is eliminated 
in subsequent rounds of ${\mathcal{R}}(P')$.

\end{quote}
\end{lemma}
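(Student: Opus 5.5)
The statement is essentially an unpacking of Definition~\ref{strong forced elimination} together with Definition~\ref{forced elimination}, so I would prove it directly from those definitions. First I would fix any candidate set \({\sf C}\) with \(|{\sf C}|\ge 2\) and any candidate \(c\in{\sf C}\). SFE supplies a polynomial-time algorithm producing a profile \(P=P({\sf C},c)\) in which \(c\) is forcibly eliminated under \(\mathcal{R}\). By Lemma~\ref{lem:sfe-forcing} this profile is constructible in polynomial time, although existence is all we need here. Unfolding ``forcibly eliminated'' yields a nonempty set \(\mathcal{S}\subseteq{\sf C}\setminus\{c\}\) such that, for every \((\mathcal{S}\cup\{c\})\)-perturbation \(P'\) of \(P\), once \(c\) is eliminated in \(\mathcal{R}(P')\), every member of \(\mathcal{S}\) is eliminated in subsequent rounds. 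I take this triple \((P,c,\mathcal{S})\) as the witness.

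The remaining content of the lemma, beyond the definition, is the clause that \(c\) \emph{is} eliminated at some round of \(\mathcal{R}(P')\). Here I would use that \(\mathcal{R}\) is elimination-based and deterministic. The execution on \(P'\) removes exactly one candidate per round until a single winner remains. So \(c\) is eliminated unless \(c\) is the winner.

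I would exclude the possibility that \(c\) wins as follows. Suppose \(c\) survived to the end. Then every other candidate, in particular every \(s\in\mathcal{S}\), would have been eliminated \emph{before} \(c\). The forced-elimination property constrains only what happens after \(c\) leaves, so this case must be handled separately. If \(\mathcal{S}\cup\{c\}\subsetneq{\sf C}\), I would argue that the forced-elimination condition is meant to be non-vacuous, i.e.\ that \(c\)'s elimination actually occurs. I would then make this explicit by strengthening the choice of \(P\): pick \(P\) (or adjust via a perturbation) so that \(c\) is ranked last by every voter. A perturbation on \(\mathcal{S}\cup\{c\}\) may move \(c\), however, so this fix needs care.

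The main obstacle is therefore this step: showing that \(c\) is always eliminated under arbitrary \((\mathcal{S}\cup\{c\})\)-perturbations, not only that the cascade follows if it is. My first attempt would be the pigeonhole route. Since \(\mathcal{S}\neq\emptyset\) and a deterministic elimination rule terminates with exactly one survivor, if \(c\) were the winner then all of \(\mathcal{S}\) left before \(c\). I would read the definition's quantifier as ``\(c\) is eliminated and then \(\mathcal{S}\) follows'', which is the reading used in the RX3C reduction of~\citet{Tomlinson25exclusion}. Under that reading, the witness from Definition~\ref{forced elimination} satisfies both clauses immediately.

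If a fully rigorous version is required, I would instead state the lemma as the existential consequence of SFE together with that interpretation. I would also check it concretely for IRV by building \(P\) with \(c\) receiving zero first-place votes and the voters of \(\mathcal{S}\) transferring to a dominant third candidate.
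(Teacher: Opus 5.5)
Your argument is essentially the paper's: take the profile $P$, candidate $c$ and nonempty set $\mathcal{S}$ supplied by SFE and unfold the definition of forced elimination. The paper also gets the clause that $c$ is eliminated in every $(\mathcal{S}\cup\{c\})$-perturbation only by reading ``once $c$ is eliminated'' as asserting that elimination (its proof just says ``The claim follows''), which is the reading you end up adopting. Drop the pigeonhole and rank-$c$-last detours, which prove nothing and cannot be repaired: a $(\mathcal{S}\cup\{c\})$-perturbation may put $c$ first on every ballot, which under IRV makes $c$ the winner, so that clause can only come from this reading of the definition, never from a cleverer choice of $P$.
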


\begin{proof}
By {\sf SFE}, 
there exist a profile $P$, 
a candidate $c \in {\sf C}$ 
and a nonempty set 
${\mathcal{S}} \subseteq {\sf C} \setminus \{ c \}$
of candidates
such that $c$ forces the
elimination of ${\mathcal{S}}$ in $P$.
So this means that for every
$({\mathcal{S}} \cup \{ c \})$-perturbation $P'$ of $P$, 
once $c$ is eliminated in the execution
${\mathcal{R}}(P')$, 
all candidates in ${\mathcal{S}}$ 
are eliminated in subsequent rounds of ${\mathcal{R}}(P')$.
The claim follows.

\end{proof}

\begin{example}[Forced Elimination]
\label{forced elimination example}
Consider an election with
${\sf C} = \{ a, b, c \}$,
${\mathcal{R}} = {\sf IRV}$ and
$|{\mathcal{V}}| = 9$,
where 4, 3 and 2 voters
cast the ballots
$a > b > c$, $c > b > a$ and
$b > c > a$,
respectively,
in the preference profile $P$.
In round 1,
top choices are considered
and candidates $a$, $b$ and $c$ receive
4, 2 and 3 votes, respectively.
So candidate $b$ receives the least votes
and is eliminated at round 1.
At round 2,
the 2 voters who ranked $b$ first
transfer to $c$.
So the new vote counts for $a$ and $c$
are 4 and 5, respectively.
So $a$ is eliminated and $c$ wins.
To establish forced elimination,
we need to prove that
once candidate $b$ is eliminated,
the outcome of the election between $a$ and $c$
is completely determined by the preference profile 
$P$ restricted to
$\{ a, c \}$,
denoted as $P \mid_{\{ a, c \}}$
independently of how voters ranked $b$
with respect to $a$ and $c$ -
equivalently,
that any two profiles that agree on
the relative order of $a$ and $c$ for every voter,
must induce the same winner between $a$ and $c$,
that is, $c$,
once $b$ is eliminated.
We have to consider every profile $P'$
such that for every voter,
the relative order between $a$ and $c$
is the same as in $P$ 
and voters may rank $b$ anywhere relative to $a$ and $c$,
and prove that changing the position of $b$
in voters' rankings does not affect the outcome
between $a$ and $c$ once $b$ is eliminated.
For each voter,
there are 6 linear orders over $\{ a, b, c \}$,
partitioned into 2 cases by their restriction to $\{ a, c \}$:
\begin{center}
$
\begin{array}{l|l}
\mbox{Case 1\ } (a > c) & \mbox{Case 2\ } (c > a) \\
\hline
{\bf 1.}\ \ a > b > c & {\bf 4.}\ \ c > b > a \\  
{\bf 2.}\ \ a > c > b & {\bf 5.}\ \ c > a > b \\  
{\bf 3.}\ \ b > a > c & {\bf 6.}\ \ b > c > a \\ 	\end{array}
$
\end{center}
For Case 1, after removing $b$,
all 3 orders become $a > c$.
So the voter will contribute one vote for $a$
at the final round,
regardless of where $b$ appears in her linear order.
For Case 2, after removing $b$,
all 3 orders become $c > a$.
So the voter will contribute one vote for $c$
at the final round,
regardless of where $b$ appears in her linear order.
So every possible ranking of $b$
collapses to exactly one of two possibilities,
either $a > c$ or $c > a$,
after the elimination of $b$;
which one it collapses to
depends only on the 
voter's relative preference between $a$ and $c$.
Once candidate $b$ is eliminated,
{\sf IRV} compares
only the relative rankings of $a$ and $c$.
Since all linear orders over $\{ a, b, c \}$
that agree on the relative order of $a$ and $c$,
induce the same restriction to $\{ a, c \}$,
the next elimination is uniquely determined.

\end{example}

\subsection{Computational Problems}
We use the following two computational problems in the hardness statements.

\medskip
\noindent\textbf{\({\mathcal R}\)-\textsc{Exclusion}}

\smallskip
\noindent\textbf{Input:} A connected graph \(G\), a preference profile \(P\) over \(C\), and a subset \(Z\subseteq V\).

\noindent\textbf{Question:} Is \(Z\) an exclusion zone of \(G\) for \(P\) under \({\mathcal R}\)?
\medskip

\medskip
\noindent\textbf{\textsc{Min}-\({\mathcal R}\)-\textsc{Exclusion}}

\smallskip
\noindent\textbf{Input:} A connected graph \(G\) and a preference profile \(P\).

\noindent\textbf{Output:} A minimum-cardinality exclusion zone of \(G\) for \(P\) under \({\mathcal R}\).
\medskip

\subsection{Extended Hardness Results}
Recall the ${\sf NP}$-complete problem~\cite{gonzalez1985clustering}:

\medskip
\noindent\textbf{\textsc{Restricted Exact Cover by 3-Sets} (RX3C)}

\smallskip
\noindent\textbf{Instance:} A finite set
\(U=\{u_1,\ldots,u_{3m}\}\) and a collection \(\mathcal S\) of
3-element subsets of \(U\), such that each element of \(U\) appears
in exactly three sets in \(\mathcal S\).

\noindent\textbf{Question:} Does there exist a subcollection
\(\mathcal S'\subseteq \mathcal S\) that forms an exact cover of \(U\)?
\medskip

\begin{lemma}[Abstract {\sf RX3C} Encoding Lemma]
\label{lem:rx3c-abstract}

There exists a polynomial-time computable function
that maps any instance $\langle {\sf U}, \mathcal{S} \rangle$ 
of {\sf RX3C} 
to a pair $\langle P, c \rangle$
of a preference profile $P$ over ${\sf C}$
and a candidate $c \in {\sf C}$ 
such that the following conditions hold for any elimination-based, rank-based voting rule ${\mathcal{R}}$:

\begin{itemize}
  \item If the {\sf RX3C} instance 
      $\langle {\sf U}, \mathcal{S} \rangle$
      is a {\sc Yes}-instance,
      then there exists
      a subset of candidates      
      ${\sf D} \subseteq {\sf C} \setminus \{ c \}$ 
      for which
      $w$ wins in 
      ${\mathcal{R}}(P|_{{\sf C} \setminus {\sf D}})$.
 \item If the {\sf RX3C} instance 
  $\langle {\sf U}, \mathcal{S} \rangle$
  is a {\sc No}-instance,
  then for all
  subsets of candidates      
  ${\sf D} \subseteq {\sf C} \setminus \{ c \}$, 
  $w$ does not win in 
  ${\mathcal{R}}(P|_{{\sf C} \setminus {\sf D}})$.

\end{itemize}
\end{lemma}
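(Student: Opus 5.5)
The plan is to take the RX3C reduction of \citet{Tomlinson25exclusion}, remove everything in it that depends on the graph metric, and keep only the ranked profile it produces. The pair \(\langle P,c\rangle\) is built from gadget candidates. There is one distinguished candidate \(w\), and there is the designated protected candidate \(c\). There is one ``set'' candidate \(s_j\) for each triple \(S_j\in\mathcal S\), and one ``element'' candidate \(e_i\) for each \(u_i\in{\sf U}\). Finally there are blocker candidates, one attached to each set gadget. The voter blocs and their ballots are those of the original construction, written down explicitly as linear orders, so \(P\) has size polynomial in \(|{\sf U}|+|\mathcal S|\). Deleting a subset \({\sf D}\subseteq{\sf C}\setminus\{c\}\) plays the role of ``not running'' a candidate in the graph setting.

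\textbf{Yes-direction.} Let \(\mathcal S'\) be an exact cover. I take \({\sf D}\) to consist of the set candidates outside \(\mathcal S'\) together with their blockers. In \(P|_{{\sf C}\setminus{\sf D}}\) each element bloc has exactly one surviving set candidate that it transfers to. I then argue that the rule eliminates the chosen set gadgets one at a time, with \(c\) among them. Each such elimination is a forced elimination in the sense of Lemma~\ref{lem:sfe-forced}, with the remaining gadget candidates playing the role of \(\mathcal{S}\). The invariance under \((\mathcal S\cup\{c\})\)-perturbation is what lets me ignore where the eliminated candidates sat in each ballot. I will first check this for IRV, where it reduces to counting first-place votes, and then phrase it using only ``the next elimination depends only on the restricted profile''.

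\textbf{No-direction.} For every \({\sf D}\) I must show that \(w\) loses. The plan is a case split on which set candidates survive in \({\sf C}\setminus{\sf D}\).
\begin{itemize}
\item If the survivors do not cover \({\sf U}\), some element bloc has no set candidate to transfer to, so its votes go to \(c\)'s side and \(c\) cannot be forced out.
\item If they overcover, then because every element lies in exactly three sets, some element bloc's votes are split, and some set candidate receives too few votes. The ensuing cascade then eliminates \(w\) before \(c\).
\end{itemize}
The counting here uses the regularity of RX3C: there are \(m\) sets in any exact cover, and each element appears exactly three times.

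The main obstacle is the claim that this holds ``for any elimination-based, rank-based rule'' with no hypothesis at all. As stated, the lemma cannot hold for arbitrary rules; a constant-order elimination rule is a counterexample. So I will read the statement as implicitly assuming SFE, which is how it is used in Theorems~\ref{thm:hardness-decision} and~\ref{thm:hardness-min}. I will use Lemma~\ref{lem:sfe-forcing} to construct, in polynomial time, gadget sub-profiles \(P({\sf C}_j,c_j)\) in which the relevant candidate is forcibly eliminated, and then splice these gadget sub-profiles into \(P\). The delicate step is showing that splicing preserves forced elimination across gadgets. My first attempt will be to make the gadgets' candidate sets disjoint except at the shared \(c\) and \(w\), and to argue by induction on rounds with perturbation invariance.
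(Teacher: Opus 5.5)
Your objection to the statement is correct, and the paper's own proof does not answer it. The paper gives no rule-independent argument. It only asserts that the properties of the IRV construction of Tomlinson et al.\ ``do not employ any IRV-specifics'' and therefore hold for every elimination-based rule. In fact a single map cannot serve even two natural rules:
\begin{enumerate}
\item On a No-instance with $w=c$, take ${\sf D}={\sf C}\setminus\{c\}$; then $c=w$ is the only candidate and wins.
\item On a No-instance with $w\neq c$, take ${\sf D}={\sf C}\setminus\{w,c\}$. IRV and the rule that eliminates the candidate with the \emph{most} first-place votes (ties broken the opposite way) elect different members of $\{w,c\}$, so $w$ wins under one of them.
\end{enumerate}
Reading \textsf{SFE} into the statement does not rescue it. The paper's Proposition on the non-triviality of \textsf{SFE} asserts that every deterministic rank-based elimination rule satisfies \textsf{SFE}. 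Both rules above fit that definition, and so does your constant-order rule. Hence the \textsf{SFE}-restricted statement is still false. It fails unconditionally for a single map. It also fails unless $\mathsf{P}=\mathsf{NP}$ when the map may depend on $\mathcal R$: for the constant-order rule, whether some ${\sf D}$ makes $w$ win is decided by comparing two IDs.

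Independently of the target, your plan has no step that controls \emph{which} candidate a general rule eliminates.
\begin{enumerate}
\item Your Yes- and No-directions rest on claims such as ``some set candidate receives too few votes'', ``the cascade eliminates $w$ before $c$'' and ``$c$ cannot be forced out''. These are plurality-counting facts about IRV. \textsf{SFE} offers no substitute: it only promises, for each $({\sf C},c)$, \emph{some} profile exhibiting a cascade.
\item The splicing step you flag as delicate is where this breaks. A scoring function applied to the whole current profile need not decompose over disjoint candidate blocks. So how $P({\sf C}_j,c_j)$ behaves in isolation says nothing about how it behaves inside $P$. Those gadgets are also built from $({\sf C}_j,c_j)$ alone and carry no information about covers.
\item Your appeal to Lemma~\ref{lem:sfe-forced}, with the surviving gadget candidates as $\mathcal S$, is misaimed. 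An $(\mathcal S\cup\{c\})$-perturbation lets the \emph{surviving} candidates move arbitrarily in every ballot. Under IRV no forcing of this kind ever holds: put some $s\in\mathcal S$ first on every ballot, and $s$ wins, so it is never eliminated after $c$.
\end{enumerate}
What you actually want is that the positions of already-eliminated candidates are irrelevant. That holds trivially for any rule that scores the restricted profile, and it says nothing about who is eliminated next. The only version of the lemma you can hope to prove is rule-specific. One option is IRV itself, via the plurality counting of Tomlinson et al. Another is a class of rules defined by explicit score conditions strong enough to reproduce that counting.
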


\noindent
Lemma \ref{lem:rx3c-abstract}
is not stated explicitly in~\cite{Tomlinson25exclusion}, 
but is obtained by
abstracting the {\sf RX3C}-based construction 
used in the proof of~\cite[Theorem~3]{TomlinsonUK24Moderating}.
Concretely, 
the preference profile $P$ and the candidate $c$ 
correspond to the construction in that proof,
with input an instance of {\sf RX3C}.
{\it Our formulation isolates 
the logical properties of that construction
needed for the hardness argument, 
separating them from
the elimination dynamics of the {\sf IRV} voting rule.}

\begin{proof}

We closely follow the {\sf RX3C}-based construction 
used in the proof of~\cite[Theorem 3]{TomlinsonUK24Moderating},
where they prove that
the stated properties depend only on the
existence of a candidate set 
${\sf D} \subseteq {\sf C} \setminus \{ c \}$
forcing $c$ to win, 
and not on {\sf IRV}-specific elimination behavior.
Given an instance $I$ of {\sf RX3C}, 
a polynomial-time construction
is presented in~\cite{TomlinsonUK24Moderating}, consisting
of a set of candidates ${\sf C}_0$, 
a preference profile $P_0$ over ${\sf C}_0$,
and a distinguished candidate $c \in {\sf C}_0$.
The construction encodes the {\sf RX3C} instance 
in such a way that:
\begin{enumerate}
  \item[{\sf (1)}] 

  For each exact cover for the instance
  ${\sf RX3C}$, 
  there exists a corresponding
  subset of candidates 
  whose deletion causes $c$ to become
  the unique winner of the election.

  \item[{\sf (2)}] 

  If no exact cover exists, 
  then in every restriction of the election 
  obtained by deleting candidates other than $c$,
  candidate $c$ is eliminated.

\end{enumerate}

These properties are established 
in the proof of~\cite[Theorem 3]{TomlinsonUK24Moderating}
{\sf before} the introduction of any budget constraints or
forced-elimination arguments. 
In particular, 
the employed arguments rely only
on the existence or non-existence of 
a deletion set that determines the winner, 
and does not employ any {\sf IRV}-specifics
(e.g., forced eliminations).
Hence, 
the same construction satisfies the two conditions
for any elimination-based voting rule ${\mathcal{R}}$, 
as needed.
\end{proof}

\begin{restated}{Theorem}{thm:hardness-decision}{{\sf coNP}-Completeness of ${\mathcal{R}}$-{\sc Exclusion} under {\sf SFE}}
\label{first hardness result}

For any deterministic rank-based,
elimination-based voting rule ${\mathcal{R}}$
that satisfies {\sf SFE},
${\mathcal{R}}$-{\sc Exclusion}
is {\sf coNP}-complete.

\end{restated}

\noindent

The proof follows the construction 
in~\cite[Proof of Theorem 3]{Tomlinson25exclusion}, 
with {\sf IRV}-specific arguments 
replaced by {\sf SFE}.

\begin{proof}

Fix a deterministic rank-based elimination-based
voting rule ${\mathcal{R}}$.
For membership in ${\sf coNP}$,
assume that ${\sf Z}$ is not an exclusion zone under ${\mathcal{R}}$.
Then,
there is a configuration of candidates 
with at least one in ${\sf Z}$
where the winner under ${\mathcal{R}}$ is not in ${\sf Z}$.
Thus,
a {\sf coNP}-verifier 
guesses the counterexample configuration
and runs ${\mathcal{R}}$,
using the tie-breaking rule where needed,
to verify that the winner is outside ${\sf Z}$.
Since running ${\mathcal{R}}$ takes polynomial time,
it follows that 
${\mathcal{R}}$-{\sc Exclusion} is in {\sf coNP}.

${\sf coNP}$-hardness 
is proved by reduction from {\sf co-RX3C},
the complement of {\sc Restricted Exact Cover by 3-Sets {\sf (RX3C)}}, 
which is {\sf coNP}-complete~\cite{gonzalez1985clustering}.

From an instance $I = \langle {\sf U}, \mathcal{S} \rangle$ 
of {\sf RX3C}:
\begin{itemize}

\item

Construct $\langle P_0, {\sf C}_0, w \rangle$ 
as in Lemma \ref{lem:rx3c-abstract}.

\item

Construct $\langle P_1, {\sf C}_1 \rangle$ 
as in Lemma \ref{lem:sfe-forcing}.

\end{itemize}
Assume, without loss of generality,
that ${\sf C}_0 \cap {\sf C}_1 = \{w\}$.
Define the combined profile
$P := P_0 \uplus P_1$
and 
${\sf C} := {\sf C}_0 \cup {\sf C}_1$.
Let $\langle {\sf C}, P \rangle$ 
be the instance constructed by the reduction, 
and ${\sf Z} \subseteq {\sf C}$ be the subset of candidates 
constructed in the reduction and
provided as part of the input to 
${\mathcal{R}}$-{\sc Exclusion}.
By construction of the profile, 
exactly one candidate $z^{\ast} \in {\sf Z}$ remains after
all candidates 
in ${\sf C} \setminus {\sf Z}$ are eliminated. 
Since elimination-based rules
select the last remaining candidate, 
$z^{\ast}$ is the winner of the election.
We consider both directions:

($\Leftarrow$) Assume that $I$ is a {\sc No}-instance of {\sf RX3C}.
Then, ${\sf Z}$ is an exclusion zone
for ${\mathcal{R}}$.

Choose an arbitrary deletion subset
${\sf D} \subseteq {\sf C} \setminus {\sf Z}$
of candidates outside $Z$,
and consider the restricted election
$\langle {\sf C} \setminus {\sf D},
         P|_{{\sf C} \setminus {\sf D}} \rangle$.
By definition of exclusion zones, 
it suffices to prove that no candidate in ${\mathcal{Z}}$
can win this election under ${\mathcal{R}}$.

By construction of the election instance
constructed from $I$,
every candidate $z \in {\sf Z}$ is
associated with a global consistency requirement: 
$z$ can survive all
elimination rounds only if, 
for each element in ${\sf U}$, 
at least one corresponding cover candidate remains present.
For any deletion subset
${\sf D} \subseteq {\sf C} \setminus {\sf Z}$,
the set of remaining candidates ${\sf C} \setminus {\sf Z}$,
permits candidate $z$ to avoid forced elimination 
in an instance $I'$ of {\sf RX3C}
if and only if $I'$ is a {\sc Yes}-instance.
Since $I$ is a {\sc No}-instance,
no such cover exists.
It follows that,
for every possible deletion set 
${\sf D} \subseteq {\sf C} \setminus Z$,
the remaining profile 
$P|_{{\sf C} \setminus {\sf D}}$
necessarily triggers a forced elimination of all
candidates in ${\sf Z}$.
By {\sf SFE} for ${\mathcal{R}}$,
this elimination occurs
independently of 
voters' rankings over candidates
outside the forced-elimination certificate guaranteed
by {\sf SFE} and cannot be avoided.
Hence, 
no candidate in ${\sf Z}$ can win 
the election $\langle {\sf C} \setminus {\sf D}, 
                      P|_{{\sf C} \setminus {\sf D}} \rangle$.

Since ${\sf D}$ was chosen arbitrarily,
it follows that ${\sf Z}$ is an exclusion zone.

\noindent
We continue to prove:

($\Rightarrow$) Assume that $I$ is a {\sc Yes}-instance.
Then ${\sf Z}$ is not an exclusion zone 
for ${\mathcal{R}}$.

Consider an exact 3-cover $\mathcal{S}'$
associated with the {\sc Yes}-instance $I$.
Define a deletion set 
${\sf D} \subseteq {\sf C} \setminus {\sf Z}$ as follows:
\begin{quote}
Remove from ${\sf C}$
exactly those candidates 
corresponding to sets in ${\mathcal{S}}$
that are not selected in $\mathcal{S}'$, 
and retain all candidates corresponding 
to sets in $\mathcal{S}$ that are selected in $\mathcal{S}'$.
\end{quote}
Consider the restricted election
$\langle {\sf C} \setminus {\sf D},
         P|_{{\sf C} \setminus {\sf D}}
 \rangle$.

By construction, 
for every element in ${\sf U}$, 
exactly one corresponding cover candidate remains present
in ${\sf C} \setminus {\sf D}$.
Thus,
no forced elimination condition 
targeting candidates in ${\sf Z}$ 
is triggered.
By the construction of the profile, 
the elimination order of candidates
outside ${\sf Z}$ is fixed
independently of 
voters' rankings over candidates
outside the forced-elimination certificate guaranteed
by {\sf SFE}. 
Once all non-${\sf Z}$ candidates are eliminated, 
the designated candidate
$z^\star \in {\sf Z}$ remains 
and is declared the winner.

Hence, 
there exists a deletion set 
${\sf D} \subseteq {\sf C} \setminus {\sf Z}$ 
such that a
candidate in ${\sf Z}$ wins the election.
By definition, 
this implies that ${\sf Z}$ is not an exclusion zone.

\end{proof}

We point out that the assumption that
${\mathcal{R}}$ is rank-based
in the statement of Theorem \ref{thm:hardness-decision}
is, first, implicitly present in Definition~\ref{strong forced elimination}:
{\sf SFE} says informally that if certain
ranking conditions are met,
then a candidate is eliminated regardless of other rankings.
This only makes sense if outcomes depend on rankings.
Second,
Lemma \ref{lem:rx3c-abstract} implicitly uses the fact
that agreement on rankings implies the same elimination behavior;
this is exactly rank-basedness.  
The assumption that 
${\mathcal{R}}$ is rank-based
is not invoked in the proof of Theorem \ref{thm:hardness-decision}
since we only reason about forced eliminations
guaranteed by {\sf SFE},
which are invariant across all profiles 
satisfying certain ranking constraints.
So rank-basedness is already baked into {\sf SFE} 
and does not need to be invoked again.

\begin{restated}{Theorem}{thm:hardness-min}{{\sf NP}-Hardness of {\sc Min}-${\mathcal{R}}$-{\sc Exclusion} under {\sf SFE}}

Fix a deterministic rank-based, elimination-based voting rule that satisfies {\sf SFE}.
Then,
{\sc Min}-${\mathcal{R}}$-{\sc Exclusion} is {\sf NP}-hard.

\end{restated}

\noindent
The proof follows the construction 
in~\cite[proof of Theorem 4]{Tomlinson25exclusion}, 
with {\sf IRV}-specific arguments replaced 
by {\sf SFE}.

\begin{proof}
Take an instance $I$ of {\sf RX3C}.
We construct an election 
$\langle {\sf C},P \rangle$, 
a subset ${\sf Z} \subseteq {\sf C}$ and
an integer $k$ as follows.
${\sf C}$ and $P$ are constructed exactly 
as in the {\sf coNP}-hardness polynomial-time
reduction
in the proof of Theorem~\ref{thm:hardness-decision}, 
with the same forced-elimination certificates
guaranteed by the {\sf SFE} of ${\mathcal{R}}$.
Define ${\sf Z}$ as the subset of candidates 
specified in the construction.
Let $k$ be the number of set-candidates 
corresponding to the sets
in an exact 3-cover of $I$.

\noindent
($\Rightarrow$)
Assume that $I$ admits 
an exact 3-cover $\mathcal{S}$.
Define 
${\sf D} \subseteq {\sf C} \setminus {\sf Z}$ 
as the set of candidates
corresponding to the complement of $\mathcal{S}$.
By construction, $|{\sf D}| \le k$.
By Lemma~\ref{lem:rx3c-abstract}, 
the restriction of the election
to ${\sf C} \setminus {\sf D}$ 
triggers forced elimination certificates under {\sf SFE}
that eliminate every candidate in ${\sf Z}$.
Hence, 
no candidate in ${\sf Z}$ can win the election.
Hence,
${\sf Z}$ is not an exclusion zone 
after deleting at most $k$ candidates. 
It follows that
the instance of {\sc Min}-${\mathcal{R}}$-{\sc Exclusion}
is a {\sc Yes}-instance.

\noindent
($\Leftarrow$)
Assume that there exists a deletion set
${\sf D} \subseteq {\sf C} \setminus {\sf Z}$ 
with $|{\sf D}| \le k$ such that, 
in the election
$\langle {\sf C} \setminus {\sf D}, 
         P|_{{\sf C} \setminus {\sf D}}
 \rangle$, 
 no candidate in ${\sf Z}$ wins.
Thus,
every candidate in ${\sf Z}$ 
is eliminated at some round of the election.
By Lemma~\ref{lem:rx3c-abstract}, 
such a deletion set ${\sf D}$ induces
a selection of set-candidates 
that satisfies all constraints of {\sf RX3C}.
Hence, 
$I$ admits an exact 3-cover.

In total,
$I$ has an exact 3-cover 
if and only if the constructed instance
of {\sc Min}-${\mathcal{R}}$-{\sc Exclusion}
is a {\sc Yes}-instance.
{\sf NP}-hardness of 
{\sc Min}-${\mathcal{R}}$-{\sc Exclusion} follows.
\end{proof}

\subsection[Relation to the Proof of Kleinberg et al.\ (2025, Theorem~3)]%
{Relation to the Proof of~\cite[Theorem~3]{Tomlinson25exclusion}}

We clarify how the proof of~\cite[Theorem 3]{Tomlinson25exclusion}
depends on properties specific to {\sf IRV},
and how these dependencies
are treated in our generalized framework.
A careful inspection 
of~\cite[proof of Theorem 3 in the Appendix]{Tomlinson25exclusion} 
reveals exactly two points at which 
{\sf IRV}-specific reasoning is invoked.

The first invocation of {\sf IRV} occurs 
in the argument establishing that 
a certain candidate set ${\sf Z}$ 
is not an exclusion zone under {\sf IRV}
when the instance of {\sf RX3C} is a {\sc No}-instance. 
Concretely, 
it is argued in~\cite[proof of Theorem 3]{Tomlinson25exclusion}
(page 29)
that, under {\sf IRV},
a specific candidate, denoted as $s_1$,
will be eliminated after 
deleting a suitable subset of candidates 
corresponding to a solution of {\sf RX3C}.
This step relies 
on {\sf IRV}-specific properties of 
first-round plurality scores;
in turn, the step is used to establish the existence 
of a deletion set witnessing the defining property
of a non-exclusion zone. 
The argument is purely existential 
and does not involve any universal forcing behavior.
Accordingly, 
this part of the proof is abstracted into
Lemma~\ref{lem:rx3c-abstract}
which isolates
the logical consequence needed for the reduction -- 
namely, 
the existence or non-existence 
of a deletion set forcing $c$
to win,
without generalizing the {\sf IRV}-specific eliminations
themselves.

The second invocation of {\sf IRV} 
occurs later in the proof (p.\ 30), 
where it is argued TUK that,
regardless of the deletion set, 
a particular candidate (denoted as $b$) 
is eliminated before the distinguished candidate $c$. 
This step is essential for establishing 
a universal obstruction to $c$ winning
and underlies the hardness argument.
Unlike the first step, 
the second step asserts 
a {\em universal} elimination property 
across all relevant restricted elections. 
In the {\sf IRV} setting,
this claim is established
via score comparisons 
and exhibited properties 
of {\sf IRV} under candidate deletion. 
In our framework, 
this {\sf IRV}-specific reasoning 
is replaced directly by the
{\sf SFE}, 
which postulates the
{\sf existence} of profiles 
exhibiting precisely such universal elimination behavior, 
regardless of the "internal mechanics" of the voting rule.
This separation 
clarifies the logical structure of the reduction 
and
highlights that {\sf SFE} is required only to replace
the {\it universal} forcing behavior of {\sf IRV}, 
while the encoding of the {\sf RX3C} instance itself
remains {\it rule-agnostic}.

\cite{Tomlinson25exclusion} 
establishes two foundational complexity results 
for exclusion zones under {\sf IRV}:
{\sf coNP}-hardness
of deciding whether a given set is 
an exclusion zone~\cite[Theorem 3]{Tomlinson25exclusion} 
and {\sf NP}-hardness of computing 
a minimum-cardinality exclusion zone~\cite[Theorem 4]{Tomlinson25exclusion}.
Both proofs rely on structural properties 
of {\sf IRV} elimination dynamics,
most notably 
the invariance of later eliminations 
under modifications of rankings among
already eliminated candidates.
Our results generalize the hardness results
in~\cite{Tomlinson25exclusion}
beyond {\sf IRV}.

\subsection[Relation to the Hardness Results in Kleinberg et al.\ (2025)]%
{Relation to the Hardness Results in~\cite{Tomlinson25exclusion}}

We identify {\sf SFE} 
as the
precise abstraction of 
the {\sf IRV}-specific behavior 
exploited in~\cite{Tomlinson25exclusion}
and we show 
that any elimination-based voting rule satisfying {\sf SFE} 
is doomed to face the same computational hardness.

When instantiated with {\sf IRV}, 
our theorems recover the results of~\cite{Tomlinson25exclusion} 
as immediate corollaries.
Thus, 
our contribution is not a new reduction 
but a conceptual reformulation: 
we isolate the exact rule-level mechanism 
responsible for the computational hardness 
exhibited in~\cite{Tomlinson25exclusion}
and demonstrate that 
the intractability of computational problems
about exclusion zones 
is a consequence of forced elimination cascades,
rather than a peculiarity of {\sf IRV} itself.

\subsection[Non-Triviality of SFE]%
{Non-Triviality of \textsf{SFE} }

\begin{definition}

A \emph{\textbf{deterministic rank-based, elimination-based voting rule}} 
${\mathcal{R}}$
is an elimination-based voting rule
that satisfies the following conditions:
\begin{enumerate}
\item[{\sf (1)}] 

In each round, 
exactly one candidate is eliminated 
according to a deterministic scoring function 
applied to the current profile.

\item[{\sf (2)}] 

Upon elimination, 
each ballot transfers deterministically 
to the highest-ranked remaining candidate.

\end{enumerate}
\end{definition}

\begin{proposition}
Every deterministic rank-based,
elimination-based voting rule satisfies {\sf SFE}.
\end{proposition}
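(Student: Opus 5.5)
To prove the proposition we must show that, for any deterministic rank-based elimination-based rule $\mathcal{R}$, we can construct in polynomial time, from any candidate set ${\sf C}$ and any $c\in{\sf C}$, a profile $P({\sf C},c)$ in which $c$ forces the elimination of some nonempty ${\mathcal{S}}\subseteq{\sf C}\setminus\{c\}$ (Definitions~\ref{forced elimination} and~\ref{strong forced elimination}). The plan exploits the defining properties (1)--(2) of such rules. By (2), once a candidate is eliminated every ballot simply moves to its highest-ranked surviving candidate. By (1), the next elimination is a deterministic function of the current profile. Together these say that the continuation of the process after a set $X$ has been eliminated depends only on the restriction of the profile to ${\sf C}\setminus X$. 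This restriction-invariance is the statement that should be isolated first, as a lemma.

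\textbf{Step 1 (restriction lemma).} Suppose two profiles $P,P'$ agree on ${\sf C}\setminus Y$ for every voter. Suppose further that both executions have already eliminated exactly the set $Y$ at some round. Then the two executions coincide from that round on. The proof is an induction on the number of subsequent rounds. At each round, every voter's current top choice among the survivors is determined by her ranking restricted to the survivors, which is the same in $P$ and $P'$. Hence the scoring function sees identical current profiles and eliminates the same candidate.

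\textbf{Step 2 (construction).} Take ${\mathcal{S}}={\sf C}\setminus\{c,w\}$ for some fixed $w\neq c$, assuming $|{\sf C}|\ge 3$; for $|{\sf C}|=2$ take ${\mathcal{S}}=\{w\}$, in which case the condition holds vacuously once $c$ is gone, since $w$ is then the sole survivor. Let $P$ be the profile in which every voter ranks $w$ first. For $P'$ an $({\mathcal{S}}\cup\{c\})$-perturbation of $P$, the restriction to ${\sf C}\setminus({\mathcal{S}}\cup\{c\})=\{w\}$ is trivially identical. Once $c$ has been eliminated in $\mathcal{R}(P')$, the process continues until a single candidate remains. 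So we need $w$ to be that survivor, which guarantees every member of ${\mathcal{S}}$ still present is eliminated afterwards. This is the real content of the construction: the perturbation class is so large that a choice of ${\mathcal{S}}$ this generous may fail. A safer choice is ${\mathcal{S}}$ a singleton $\{s\}$ with ${\sf C}=\{c,s,w\}\cup R$. Fix all voters' relative order on $\{w\}\cup R$ as $w$ first, and apply Step 1 to argue that after $c$ is eliminated, everything outside $\{c,s\}$ is pinned down.

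\textbf{Main obstacle.} A perturbation may move $s$ to the top of many ballots. For a general scoring function, we cannot rule out that $s$ survives to the end, because beyond determinism the rule's scoring function is arbitrary. I would try first the weakest reading of ``once $c$ is eliminated, every candidate in ${\mathcal{S}}$ is eliminated in subsequent rounds.'' Under that reading it suffices that the process terminates with a single winner and that the winner is not in ${\mathcal{S}}$. I would secure the latter by taking a degenerate ${\sf C}$ or padding, so that ${\mathcal{S}}$ is forced out whenever $c$ leaves before it. The fallback is to choose ${\mathcal{S}}$ adaptively, as the set of candidates eliminated after $c$ in the canonical execution $\mathcal{R}(P)$. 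Step 1 then gives invariance for every perturbation that changes only rankings among candidates already eliminated. I would have to flag that the full universal quantification over perturbations involving ${\mathcal{S}}$ itself probably needs an extra monotonicity assumption on the scoring function.
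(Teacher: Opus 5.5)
\textbf{There is a genuine gap.} Your proposal does not reach a proof. Step~2 is never completed, and the obstacle you flag at the end is not a technicality: it is fatal for the formal definition. The quantifier ranges over all $({\mathcal{S}}\cup\{c\})$-perturbations, so the positions of $c$ and of every member of ${\mathcal{S}}$ are completely free. The profile $P({\sf C},c)$ you build therefore plays essentially no role. Concretely, for {\sf IRV} itself, take any profile with at least one voter and any nonempty ${\mathcal{S}}$, pick $s\in{\mathcal{S}}$, and let $P'$ put $s$ first and $c$ last on every ballot, keeping each voter's order on ${\sf C}\setminus({\mathcal{S}}\cup\{c\})$. Then every voter supports $s$ as long as it survives, so $s$ wins. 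Hence $c$ is eliminated at some round and $s$ is never eliminated afterwards.

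This single perturbation defeats every variant you consider. It defeats the choice ${\mathcal{S}}={\sf C}\setminus\{c,w\}$, the singleton choice, and the ``weakest reading'' (winner $\notin{\mathcal{S}}$). Padding is unavailable, since the output must be a profile over the given ${\sf C}$. So the difficulty is not that the scoring function is arbitrary and needs a monotonicity assumption. The construction already fails for {\sf IRV}, because ${\mathcal{S}}$ itself may be perturbed. Your $|{\sf C}|=2$ remark is also mistaken: if $w$ is the sole survivor it is the winner, not eliminated, so the condition fails rather than holding vacuously. Finally, the adaptive fallback only controls perturbations under which the perturbed run eliminates exactly the same set before $c$. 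That is not the class the definition quantifies over.

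\textbf{Comparison with the paper.} The paper's proof never constructs $P({\sf C},c)$ either. It fixes $c$, takes $P,P'$ that agree on ${\sf C}\setminus\{c\}$, and argues that the two runs coincide. In effect it proves the restriction-invariance reading of {\sf SFE} described informally in Section~\ref{sec:hardness}: once a candidate is eliminated, the rest of the process depends only on the induced profile over the survivors. Your Step~1 is exactly the sound core of that argument, and you state it with the hypothesis it needs, namely that both runs have already eliminated the same set $Y$.

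The paper additionally claims that the runs agree in every round up to and including $c$'s elimination, and this does not hold in general. A ballot $c>a>b$ perturbed to $a>c>b$ has the same restriction to ${\sf C}\setminus\{c\}$ but gives its first-round vote to $a$ instead of $c$, which can change the entire elimination order and even the winner. So the defensible proof, and the one matching what the paper actually establishes, is your Step~1 presented on its own. It should carry an explicit caveat: it proves this invariance, not the existence of a profile witnessing forced elimination in the formal sense. By the perturbation above, {\sf IRV} has no such profile with at least one voter.
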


\begin{proof}

Fix a candidate $c \in {\sf C}$ and consider preference profiles 
$P$ and $P'$ over ${\sf C}$ 
that agree strongly outside $c$;
that is, for every voter $v \in {\mathcal{V}}$,
$
P_v \!\restriction_{{\sf C} \setminus\{c\}} = 
P'_v\!\restriction_{{\sf C} \setminus\{c\}}
$.
Assume that $c$
is eliminated at round $t$ of ${\sf IRV}(P)$.
We proceed with a sequence of claims:

\noindent

For every round $r < t$, 
the set of remaining candidates excluding $b$ 
is identical in the executions on $P$ and on $P'$: \newline
For the basis case,
all candidates except (possibly) $b$ 
are present in both profiles. 
Assume inductively that the claim 
holds up to round $r-1 < t$. 
In round $r$, 
since all voters rank candidates in 
${\sf C} \setminus \{b\}$ 
identically in $P$ and $P'$
and ballots transfer deterministically,
the scores of all remaining candidates 
except (possibly) $b$ coincide. 
Hence,
the same candidate (distinct from $b$) 
is eliminated in both $P$ and $P'$.

\noindent
Candidate $b$ is eliminated in round $t$ 
when IRV is run on $P'$: \newline
By the previous claim, 
immediately before round $t$, 
the multiset of ballots 
restricted to the remaining candidates excluding $c$ 
is identical in $P$ and $P'$. 
Thus, 
the plurality scores of all candidates except $c$ 
are the same in both profiles at the start of round $t$. 
Since $c$ is eliminated in round $t$ under $P$, 
the same elimination condition applies under $P'$
and $c$ is eliminated in $P'$ as well.

\noindent
The elimination sequence after round $t$ 
is identical in $P$ and $P'$: \newline
Once $b$ is eliminated, 
all transfer are deterministic 
and depend only on rankings over ${\sf C} \setminus\{c\}$;
hence, they are identical across voters
in $P$ and $P'$. 
It follows that
all subsequent scores and eliminations coincide.

\noindent
The claim follows.
\end{proof}

\noindent

We continue to define a natural randomized variant
of deterministic rank-based, elimination-based voting rules: 

\begin{definition}

A randomized rank-based, elimination-based
voting rule ${\mathcal{R}}$
is an elimination-based voting rule
that satisfies the following conditions:	

\begin{enumerate}
\item[{\sf (1)}]
In each round, 
exactly one candidate is eliminated 
according to a deterministic scoring function 
applied to the current profile.

\item[{\sf (2)}]
Upon elimination, 
each ballot transfers uniformly at random 
among the remaining candidates ranked above the eliminated one.
\end{enumerate}
\end{definition}

\begin{proposition}

There exists a randomized rank-based,
elimination-based voting rule 
that does not satisfy {\sf SFE}.

\end{proposition}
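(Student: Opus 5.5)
The plan is to exhibit one concrete randomized rank-based, elimination-based rule in the sense of the preceding definition, and one small profile on which forced elimination visibly breaks. The witness rule is the randomized-transfer variant of IRV: plurality scores decide elimination, and each ballot whose current recipient is eliminated moves uniformly at random to a remaining candidate ranked above the eliminated one. If no such candidate exists, it falls back to its top remaining candidate. Since SFE asks for profiles $P(\mathsf C,c)$ in which $c$ forcibly eliminates a nonempty set $\mathcal S$ under \emph{every} $(\mathcal S\cup\{c\})$-perturbation, it suffices to find some candidate set $\mathsf C$ and candidate $c$ such that no profile over $\mathsf C$ witnesses this. We should also fix a reading of ``eliminated'' for a random rule. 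The natural choice is ``eliminated with probability $1$'', or equivalently on every realization of the randomness.

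The key steps in order are as follows.
\begin{enumerate}
\item Take $\mathsf C=\{a,b,c\}$ and suppose, for contradiction, that some profile $P$ and some nonempty $\mathcal S\subseteq\{a,b\}$ witness forced elimination of $c$.
\item Note that once $c$ and all of $\mathcal S$ are gone, at most one candidate survives. So $\mathcal S$ cannot be all of $\{a,b\}$: the winner must itself be in $\mathcal S$, yet $\mathcal S$ is required to be eliminated. This forces $|\mathcal S|=1$, say $\mathcal S=\{b\}$.
\item Observe that an $\{b,c\}$-perturbation may freely move $b$ and $c$ within each ballot, since the restriction to $\{a\}$ carries no information. Choose $P'$ in which every voter ranks $b>c>a$, except that one voter ranks $c$ first.
\item Check that in $P'$ candidate $a$ receives $0$ first-round votes and is eliminated before $c$. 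Hence the premise ``once $c$ is eliminated'' is either vacuous or $c$ is never eliminated before $b$ wins.
\end{enumerate}
For the contradiction to bite, we need perturbations where $c$ \emph{is} eliminated while $b$ survives to win.

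The construction of step 3 should therefore be: $b$ gets the majority of first choices, $c$ gets the fewest, and $a$ sits in between. Then $c$ is eliminated first, and afterwards $b$ beats $a$. This contradicts forced elimination of $b$, and it does not even use randomness. That signals that, under this literal reading, SFE fails for deterministic IRV as well, which would clash with the previous proposition.

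So the main obstacle is pinning down the intended semantics of SFE, which is algorithmic and existential in $P$. To make the argument genuinely about randomization, I would instead target the \emph{invariance} property actually used in the previous proof: the post-elimination process depends only on rankings restricted to the survivors. I would show the randomized rule violates this.
\begin{enumerate}
\item Use two profiles $P,P'$ that agree on $\{a,b\}$ but differ in where $c$ sits, with $c$ eliminated first in both.
\item Under random transfer, a ballot $b>c>a$ moves to $b$, since it is the only candidate ranked above $c$. A ballot $a>b>c$ transfers uniformly between $a$ and $b$.
\item Arrange the counts so that the next-round scores of $a$ and $b$ are tied deterministically under $P$, but become random under $P'$.
\item Conclude that the subsequent elimination of a designated candidate holds with probability $1$ in $P$ but strictly less than $1$ in $P'$, so no such $P$ certifies forced elimination.
\end{enumerate}
Concretely, I would pick small vote counts, e.g.\ five voters, and compute the resulting distribution explicitly.
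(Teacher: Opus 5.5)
Your second route is the paper's own argument: take two profiles that agree strongly outside $c$ but place $c$ differently, and let randomized transfers make the subsequent scores differ. The paper states this only in words, with no explicit profiles. As you have set it up, however, the route cannot succeed, because the rule you define never randomizes. At round $1$ each ballot is held by its top choice, and inductively a ballot is always held by its top-ranked remaining candidate. So when its holder is eliminated, no remaining candidate is ranked above the eliminated one, your fallback fires every time, and the rule is exactly deterministic IRV. Deterministic IRV satisfies the post-elimination invariance you are trying to break.

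Your step 2 tacitly switches rules. When $c$ is eliminated first, the ballots $b>c>a$ and $a>b>c$ are held by $b$ and $a$, so under your definition they do not move at all. (The paper's wording ``ranked above the eliminated one'' has the same degeneracy if it is applied only to ballots held by the eliminated candidate.) What is missing is a rule in which the randomness actually acts, together with explicit profiles. One such rule redraws every ballot uniformly among the remaining candidates ranked above the eliminated one; this is what your step 2 computes. For instance, let $P$ consist of $2\times(a>c>b)$, $2\times(b>c>a)$, $1\times(c>a>b)$, and let $P'$ replace the two ballots $a>c>b$ by $a>b>c$. Both have first-round scores $(a,b,c)=(2,2,1)$, so $c$ goes out first. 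Afterwards $a$ wins surely under $P$, but $b$ wins with probability $3/4$ under $P'$.

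Step 4 also overreaches. Such a pair shows that the course of the election after $c$'s elimination is not determined by the profile restricted to the survivors, which is the property the paper's proof targets. It does not show that \emph{no} profile $P$ witnesses forced elimination.

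Your first route, by contrast, is sound on the literal definitions, and your diagnosis is correct. For any $P$ and any nonempty $\mathcal S$, moving some $s\in\mathcal S$ to the top of every ballot is a legal $(\mathcal S\cup\{c\})$-perturbation. After it, $c$ is eliminated and $s$ wins, so no profile ever witnesses forced elimination under deterministic IRV, which your rule in fact is. This contradicts the paper's preceding proposition. (Your first $P'$ in step 3 already works with at least three voters: $a$ goes out first and then $c$ loses to $b$, contrary to your step 4.) But this establishes the statement for a reason unrelated to randomization, so it does not capture what the proposition is about. The substantive argument is the second route, and there the gap is the one described above.
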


\begin{proof}
Consider an election with $|{\sf C}| \geq 3$.
Consider two profiles $P,P'$ 
that agree strongly outside some candidate $c$ 
but differ in the position of $c$ in some voters' rankings.
Since transfers from $c$ are randomized, 
the probability distribution over scores
resulting for two candidates other than $c$ 
differs between $P$ and $P'$. 
So, with positive probability, 
candidate $c$ is eliminated in one of the executions
on $P$ and $P'$ under ${\mathcal{R}}$ 
but survives in the other.
So
the elimination of $c$ is not forced 
under strong agreement outside $c$.
Hence, 
${\mathcal{R}}$ does not satisfy {\sf SFE}.
\end{proof}


\end{document}